\newtheorem{theorem}{Theorem}
\newtheorem{lemma}{Lemma}[theorem]
\newtheorem{corollary}{Corollary}[theorem]
\newtheorem{prop}{Proposition}
\def\BibTeX{{\rm B\kern-.05em{\sc i\kern-.025em b}\kern-.08em
    T\kern-.1667em\lower.7ex\hbox{E}\kern-.125emX}}
\begin{document}
\title{Uncertainty Propagation on Unimodular Matrix\\ Lie Groups}
\author{Jikai Ye, Amitesh S. Jayaraman, and Gregory S. Chirikjian, \IEEEmembership{Fellow, IEEE}
\thanks{This work was supported by NUS Startup  grants A-0009059-02-00 and A-0009059-03-00, National Research Foundation, Singapore, under its Medium Sized Centre Programme - Centre for Advanced Robotics Technology Innovation (CARTIN),  sub award A-0009428-08-00, and AME Programmatic Fund Project MARIO A-0008449-01-00.}
\thanks{J. Ye and G. S. Chirikjian are with the Department of Mechanical Engineering, National University of Singapore, Singapore (e-mail: \{jikai.ye, mpegre\}@nus.edu.sg).}
\thanks{A. S. Jayaraman is with the Department of Mechanical Engineering, Stanford University, CA 94305, USA (e-mail: amiteshs@stanford.edu).}
}
\bibliographystyle{IEEEtran}

\maketitle

%%%%%%%%%%%%%%%%%%%%%%%%%%%%%%%%%%%%%%%%%%%%%%%%%%%%%%%%%%%%%%%%%%%%%%%%%%%%%%%%
\begin{abstract}
This paper addresses uncertainty propagation on unimodular matrix Lie groups that have a surjective exponential map.
% When the state space of a system has a Lie group structure, for example, a kinematic cart moving in 2D whose configuration space is $SE(2)$.
We derive the exact formula for the propagation of mean and covariance in a continuous-time setting from the governing Fokker-Planck equation.
Two approximate propagation methods are discussed based on the exact formula.
One uses numerical quadrature and another utilizes the expansion of moments.
A closed-form second-order propagation formula is derived.
We apply the general theory to the joint attitude and angular momentum uncertainty propagation problem and numerical experiments demonstrate two approximation methods.
These results show that our new methods have high accuracy while being computationally efficient.
\end{abstract}

\begin{IEEEkeywords}
Uncertainty propagation, matrix Lie group, Fokker-Planck equation, attitude estimation
\end{IEEEkeywords}
%%%%%%%%%%%%%%%%%%%%%%%%%%%%%%%%%%%%%%%%%%%%%%%%%%%%%%%%%%%%%%%%%%%%%%%%%%%%%%%%
\section{Introduction}\label{sec:Intro}

\IEEEPARstart{T}{he} problem of uncertainty propagation is a classical topic in robotics and control.
For a stochastic dynamical system, the goal is to estimate the evolution of the probability distribution of the state variables.
% The distribution can be fused with new observations to give an estimate of the current state.
When the configuration space of a system has a Lie group structure, such as the rotation group $SO(3)$ for a three-dimensional rigid body, the uncertainty propagation becomes more challenging due to the non-Euclidean nature of the group manifold.
This paper provides a precise description of the propagation process of mean and covariance for a class of unimodular matrix Lie groups.

Uncertainty propagation is usually seen in filtering algorithms as the first step.
Traditional filters for nonlinear systems in Euclidean space, such as the extended Kalman filter (EKF) and unscented Kalman filter (UKF), both have their counterparts in Lie groups.
The invariant extended Kalman filter (IEKF), compared to a parametrized filter, has a nice convergence property \cite{barrau2018invariant}.
For a continuous dynamical system, IEKF calculates the deterministic trajectory as the propagation of the mean and linearizes the transition model on the exponential coordinate to propagate the covariance \cite{barrau2016invariant}.
The original derivation for continuous systems is simplified in \cite{phogat2020invariant}.
In \cite{bourmaud2015continuous}, the authors derive a first-order propagation formula from a stochastic differential equation and justify that the approximation error of the mean remains zero up to first-order.
Note that all the EKFs on Lie groups use the deterministic trajectory for mean propagation and a linearized model for propagating the covariance.
The UKF on Lie groups proposed in \cite{brossard2017unscented} saves the linearization step.
It propagates the mean using the deterministic transition model and calculates the propagation of sigma points by an exponential map.
The covariance is approximated by the covariance of the propagated sigma points on the exponential coordinate.
Later, the authors generalize this UKF to parallelizable manifolds in \cite{brossard2020code}.
In \cite{loianno2016visual}, the mean propagation is first calculated by the deterministic transition model and then adjusted by the algorithmic average of sigma points on the exponential coordinate, while in \cite{forbes2017sigma} the mean is adjusted by optimization on the Lie group.
The paper \cite{sjoberg2021lie} does not use the exponential map for sigma points propagation, but uses the Jacobian matrix to formulate the transition model on the exponential coordinate.
% The coordinates are treated as being in the Euclidean space and the mean and covariance are propagated accordingly.
The Gaussian constructed by the mean and covariance on the coordinate is mapped back to a concentrated Gaussian on the group after each step.

All previous methods start from the transition model and perform approximations based on it.
In contrast, our method first converts the noisy transition model to a corresponding Fokker-Planck equation (FPE) and extracts the exact propagation of the mean and covariance from the FPE.
Also, we treat the mean propagation by a change of variables on Lie groups, which avoids the artifact introduced by choosing a specific coordinate system or the time-consuming optimization step used in previous works.
The most related work is \cite{jayaraman2023lietheoretic}, which develops the propagation formula for the cotangent bundle group of $SO(3)$.
Here we provide a general formula for a class of unimodular Lie groups with a computable exact propagation formula and two approximation methods.

% For all Lie groups, a local coordinate around the identity exists called the exponential coordinate.
% It maps the tangent space at the identity to group elements by exponential map.
% Previous works suggest the Gaussian distribution on the exponential coordinate fits the samples from a stochastic dynamical system well, like in the localization problem on $SE(2)$ \cite{long2013banana}, manipulator kinematics modeling problem on $SE(3)$ \cite{wang2006error}, and navigation problem on extended pose group $SE_2(3)$ \cite{brossard2021associating}.
% To go beyond previous works where the uncertainty propagation formula is developed for a specific group, we develop an exact propagation formula for all unimodular matrix Lie groups using the method called the Expansion of Moments (EoM), which is first used in Lie group setting in \textcolor{red}{CDC} and generalized in this paper.
% Compared to \cite{long2013banana,wang2006error,brossard2021associating} where all coefficient matrices are time-dependent only, our method can the case where the coefficient matrices are also configuration-dependent.

% In the derivation of the propagation formula, we start from a Stratonovich stochastic differential equation (SDE) on a Lie group and convert it to a Fokker-Planck equation (FPE).
% The Stratonovich FPE is equivalent to the It$\hat{\text{o}}$'s FPE that is used in \cite{park2008kinematic,chirikjian2011stochastic} when the coefficient matrix for noise is configuration-independent, otherwise not equivalent.
When a Fokker-Planck equation is formulated in Euclidean space, several approximation methods have been proposed, like the small noise expansion \cite{gardiner1985handbook}, homotopy perturbation method \cite{biazar2008homotopy}, Adomian method \cite{tatari2007application}, variational integration method \cite{biazar2010variational}, new iterative method \cite{hemeda2018new}, etc.
All these methods recursively compute an approximate solution that approaches the exact solution with increasing iterations.
When the domain of the FPE is the whole Euclidean space, the recursive equations for the probability density function can be converted to a set of ordinary differential equations for the moments by using integration by parts.
% In dealing with linear partial differential equations, like the FPE in our case, ADM, VIM, and NIM yield an equivalent set of equations for moments.
The method that is the most similar to ours is the small noise expansion in \cite{gardiner1985handbook}, which recenters the probability density function at the solution to the deterministic trajectory, expands the distribution into an infinite sum of functions weighted by the power of a small noise coefficient, and matches terms with the same weight to form the set of equations.
Our approach recenters around the mean, not the deterministic solution, derives the exact propagation formula to approximate instead of approximating from the start, and avoids the complex procedure to match terms in the perturbation method.
Also, since the nice property of partial derivative in Euclidean space, $\frac{\partial x_i}{\partial x_j}=\text{const}$, does not hold in a Lie group setting, \textit{i}. \textit{e}. $E^r_i(x_j)\neq \text{const}$, where $E^r_i$ is the Lie derivative in the $i$th basis direction and $x_j$ is the $j$th coordinate variable in the exponential coordinates, the Jacobian matrix will show up in calculating these terms and makes the calculation more complicated.

This paper contributes to providing an exact propagation formula for the mean and covariance using the Fokker-Planck equation corresponding to the noisy transition model.
Two approximate propagation methods based on numerical quadrature and expansion of moments are discussed.
For the expansion of moments, we also provide a closed-form second-order propagation formula for mean and covariance.
Both methods can be used for unimodular Lie groups that have a surjective exponential map, such as the rotation group $SO(N)$, the special Euclidean group $SE(N)$, and their tangent/cotangent bundle groups \cite{ng2022equivariant,jayaraman2020black,jayaraman2023inertial}.
% The ``order'' here means truncating at a certain order moment, \textit{e}. \textit{g}. if we neglect moments higher than the second-order, we call it a second-order method.
We apply the theory to propagating the uncertainty of the attitude and angular momentum of a rigid body whose motion is governed by Euler's equation of motion.
A formula based on the unscented transform and a closed-form second-order propagation formula based on the expansion of moments are implemented and compared numerically to an existing unscented Kalman filter \cite{sjoberg2021lie}.
Both formulas achieve higher accuracy compared to the baseline method and the closed-form formula is much more computationally efficient.

The paper is organized as follows:
the background knowledge of Lie group theory and Fokker-Planck equation is introduced in Section \ref{sec:LieGrpTheory}; 
the general theory of error propagation on Lie group, approximations based on unscented transform and expansion of moments, and their connection to the propagation step of Kalman filters are presented in Section \ref{sec:EoM_FPE};
the rotational dynamics, the direct product group structure on the phase space, and the propagation formula are stated in Section \ref{sec:GoverningEquations};
the numerical experiments and results are in Section \ref{sec:Results}; the paper concludes in Section \ref{sec:Concl}.

% The main contribution of the paper is: i) We verify that the exponential coordinate of $SO(3)$ is a natural parametrization to define a Gaussian solution. ii) We derive both a second-order and a fourth-order propagation formula to propagate the uncertainty on rotation and angular momentum jointly. ii) Various numerical experiments are conducted to evaluate the Gaussian distribution constructed from the propagated solution against the Extended Kalman filter.  

\section{\label{sec:LieGrpTheory} Background}

In this section, we introduce the underlying concepts and results which will be used in the following section.
Section \ref{sec:back_A} reviews the basics of Lie group theory and Section \ref{sec:back_B} reviews the stochastic differential equations (SDE) on a Lie group and its corresponding Fokker-Planck equation on the group.
A more comprehensive introduction to these materials can be found in \cite{chirikjian2016harmonic,chirikjian2011stochastic}.

\subsection{Lie group basics} \label{sec:back_A}

A \emph{group} is a non-empty set together with a closed binary operation. 
That is, for two elements, $g_1$ and $g_2$ in a group $G$, there exists a known operation $\circ$ such that $g_1\circ g_2 \in G$. 
For a group, this operation must be \emph{associative} so that, $g_1\circ(g_2\circ g_3) = (g_1\circ g_2)\circ g_3$. Furthermore, a group always has an \emph{identity} element such that for every $g\in G$, we have an $e$ so that $e\circ g = g \circ e = g$. 
Finally, every element in a group must have an \emph{inverse} so that if $g^{-1}\circ g = g\circ g^{-1} = e$ then $g^{-1}$ is the inverse of $g$ and $g^{-1}\in G$.
One example of a group is the real general linear group $GL(n,\mathbb{R})$, which is the collection of all invertible $n\times n$ real matrices together with the binary operation being matrix multiplication.
A \emph{subgroup} $H$ of a group $G$ is a subset of $G$ with the same binary operation closed in $H$.

A \emph{Lie group} is a group that is also an analytic manifold and for which the group operations of multiplication and inverse of elements are analytic.
We only consider \emph{real matrix Lie groups} in this paper, which are subgroups of $GL(n,\mathbb{R})$. 
The dimension of a matrix Lie group is defined by the dimension of the manifold, not the size of the matrix.
The \emph{Lie algebra} $\mathcal{G}$ of a matrix Lie group $G$ is a vector space of matrices with an additional closed binary operation called the Lie bracket: $[X, Y]\doteq XY-YX, \, \forall X,Y\in \mathcal{G}$.
The dimension of the Lie algebra is the same as the dimension of the Lie group.
After choosing a set of basis $\{E_1,\,E_2,\,\cdots,\,E_N\}$ for $\mathcal{G}$, each element of $\mathcal{G}$ can be written as the linear sum of them, \textit{i}. \text{e}. $X=\sum_{i=1}^N x_i E_i$.
We stack the weight, $x_i$, into a column vector, $\boldsymbol{x}$, and use the notation ``$\wedge$'' to convert it into the matrix form $X=\boldsymbol{x}^{\wedge}$.
The inverse operation is denoted as ``$\vee$'', which converts a matrix (an element of the Lie algebra) into a column vector.
Choosing a basis and declaring that their inner product is defined by $(E_i, \,E_j)=\boldsymbol{e}_i\cdot \boldsymbol{e}_j=\delta_{ij}$, where $\delta_{ij}=1$ when $i=j$ and $\delta_{ij}=0$ otherwise, is equivalent to fixing a metric tensor for $G$.

Now, we define the ``little ad'' operator. 
For $X,Y\in\mathcal{G}$, we define $ad(X)Y \doteq [X,Y]$. 
It can be seen that $ad(X)$ is a linear map on the Lie algebra, so it has a matrix form $[ad(X)]$ such that $[X,Y]^{\vee}= (ad(X)Y)^\vee=[ad(X)]Y^\vee$.
The matrix form of the operator can be calculated by
\begin{equation}\label{eq:littlead}
[ad(X)] = \left[[X,E_1]^\vee,\cdots,[X,E_N]^\vee\right],
\end{equation}
The adjoint representation, $Ad:G\rightarrow GL(\mathcal{G})$, is defined by: $Ad(g)X\doteq gXg^{-1}$, which also has a matrix form that obeys $[Ad(g)]X^{\vee}=(Ad(g)X)^{\vee}$.
The formula for $[Ad(g)]$ is
\begin{equation}
    [Ad(g)]=[(gE_1 g^{-1})^{\vee}, \,(gE_2 g^{-1})^{\vee},\,...,\,(gE_N g^{-1})^{\vee}].
\end{equation}

A matrix Lie group and its Lie algebra are related by matrix exponential and logarithm: 
\begin{equation}
    \exp(X)\doteq \sum_{k=0}^{\infty}\frac{1}{k!}X^k \in G, \,\, \text{if} \, X \in \mathcal{G},
\end{equation}
\begin{equation}
    \log(A)\doteq \sum_{k=1}^{\infty}\frac{(-1)^{k+1}}{k}(A-\mathbb{I})^{k} \in \mathcal{G}, \,\, \text{if} \, A \in {G}.
\end{equation}
The exponential map is bijective within a neighborhood of $X=\mathbb{O}$, where we can parametrize each group element as $g(\boldsymbol{x})=\exp(\boldsymbol{x}^{\wedge})$.
It can be considered as a coordinate system of the group around the identity,
which is called \emph{exponential coordinate}.
In this paper, we consider groups whose exponential map is surjective.
There may exist other parametrization methods, like the Euler angle for $SO(3)$.
We use $g(\boldsymbol{q})$ to denote a general parametrization and $g(\boldsymbol{x})$ to denote the exponential coordinate.

The right Jacobian for a parametrization \cite{chirikjian2016harmonic} is defined by the matrix
\begin{equation}\label{eq:JR}
J_r(\boldsymbol{q}) = \left[\left(g^{-1}\frac{\partial g}{\partial q_1}\right)^\vee,\cdots,\left(g^{-1}\frac{\partial g}{\partial q_N}\right)^\vee\right],
\end{equation}
and the left Jacobian matrix is
\begin{equation}\label{eq:JL}
J_l(\boldsymbol{q}) = \left[\left(\frac{\partial g}{\partial q_1}g^{-1}\right)^\vee,\cdots,\left(\frac{\partial g}{\partial q_N}g^{-1}\right)^\vee\right],
\end{equation}
where in both cases the $[\cdots]$ emphasises that this is a matrix. 
Note that $J_r\neq J_l$ in general and that $[Ad(g)]=J_l J_r^{-1}$. 
Further, note that $J_r(h\circ g(\boldsymbol{q})) = J_r(g(\boldsymbol{q}))$ and $J_l(g(\boldsymbol{q})\circ h) = J_l(g(\boldsymbol{q}))$ for a constant $h\in G$. 
That is, $l$ and $r$ denote on which side the partial derivatives appear in the columns of the Jacobian, which is opposite to the invariance. 
The Jacobians connect the derivative on matrices to the derivative on coordinates:
\begin{equation}
    \left [ g^{-1}(\boldsymbol{q}(t))\frac{d g(\boldsymbol{q}(t))}{dt}\right ]^{\vee}=J_r(\boldsymbol{q}(t))\frac{d \boldsymbol{q}(t)}{dt},
\end{equation}
\begin{equation} \label{eq:l_J_para}
    \left [ \frac{d g(\boldsymbol{q}(t))}{dt} g^{-1}(\boldsymbol{q}(t)) \right ]^{\vee}=J_l(\boldsymbol{q}(t))\frac{d \boldsymbol{q}(t)}{dt}.
\end{equation}

The right and left Lie derivatives of the group are defined for a differentiable function $f:G\rightarrow\mathbb{R}^m$ to be
\begin{equation}\label{eq:rightDerivativeNP}
    (E^r_i f)(g) \;\doteq\; \left.\frac{d}{dt}\right|_{t = 0}f(g\circ \exp({tE_i})),
\end{equation}
\begin{equation}\label{eq:leftDerivativeNP}
    (E^l_i f)(g) \;\doteq\; \left.\frac{d}{dt}\right|_{t = 0}f(\exp(-tE_i)\circ g),
\end{equation}
which is akin to the concept of a directional derivative on the group.
In this convention, $l$ and $r$ refer to which side of the argument $\exp(t E_i)$ appears not the invariance of the operators, which is the opposite.
That is, $E^l$ is right invariant and $E^r$ is left invariant.
It can be shown \cite{chirikjian2016harmonic} that the parametric form of the right and left Lie derivatives can also be expressed in terms of Jacobians as
\begin{equation}\label{eq:rightDerivativeJT}
    (E^r_i f)(g(\boldsymbol{q})) = \sum_{j=1}^N [J_r(\boldsymbol{q})]^{-T}_{ij}\frac{\partial f(g(\boldsymbol{q}))}{\partial q_j},
\end{equation}
\begin{equation}\label{eq:leftDerivativeJT}
    (E^l_i f)(g(\boldsymbol{q})) = -\sum_{j=1}^N [J_l(\boldsymbol{q})]^{-T}_{ij}\frac{\partial f(g(\boldsymbol{q}))}{\partial q_j}.
\end{equation}
% and ``$T$'' denotes transposition, which commutes with the inverse operation and hence the notation, $[J]^{-T} = [J^{-1}]^T = [J^T]^{-1}$ (where $J$ is either the left or right Jacobian). 
% We also remark that the right Lie derivative commutes with left shifts, i.e., for a $\mu\in G$, if we define $L(\mu)f(g) = f(\mu^{-1}\circ g)$ and $R(\mu)f(g) = f(g\circ\mu^{-1})$ then, $(E^R_if)(\mu^{-1}\circ g) = (E^R_iL(\mu)f)(g) = L(\mu)(E^R_if)(g)$ and the left Lie derivative commutes with right shifts, $(E^L_if)(g\circ \mu^{-1}) = (E^L_iR(\mu)f)(g) = R(\mu)(E^L_i)f(g)$. 

We proceed to introduce the Haar measure and integration on a Lie group.
There are two such measures (each of which is unique up to scaling), $d_l g \doteq |\det J_l(\boldsymbol{q})|d\boldsymbol{q}$ and $d_lg \doteq |\det J_l(\boldsymbol{q})|d\boldsymbol{q}$. 
If $|\det [Ad(g)]| = |\det J_l|/|\det J_r|=1$, the group is said to be \emph{unimodular}.
For a constant element $h\in G$, we have $d_l(g\circ h)=d_l g=d_r g = d_r(h\circ g)$ for a unimodular group.
If the domain $D\subseteq \mathbb{R}^N$ of the parametrization has a one-to-one correspondence to the group except for a set of measure zero in $G$, we call the parametrization ``almost global'', and the integration on the group is defined by the integration on $D$ as
\begin{equation}
    \int_G f(g)dg=\int_{D\subseteq \mathbb{R}^N} f(g(\boldsymbol{q}))|\det{J_l(\boldsymbol{q})}|d\boldsymbol{q}.
\end{equation}
The groups we consider in this paper all have surjective exponential maps and thus the exponential coordinate is such a parametrization.

A probability density function $f(g)$ on $G$ is a function that satisfies the properties that $f(g) \geq 0$ for all $g\in G$ and $$
\int_G\;f(g)\;dg = 1.$$
The ``right'' group-theoretic mean $\mu_r$ and covariance $\Sigma_r$ of $f(g,t)$ are defined by:
\begin{equation}\label{eq:GroupMu}
\int_G \log^\vee(g \circ \mu_r^{-1})f(g)dg \;\doteq\; \boldsymbol{0},
\end{equation}
and,
\begin{equation}\label{eq:GroupCov}
\Sigma_r \;\doteq\; \int_G [\log^\vee(g \circ \mu_r^{-1})][\log^\vee(g \circ \mu_r^{-1})]^T f(g)dg.
\end{equation}
The definition of ``left'' group-theoretic mean $\mu_l$ and covariance $\Sigma_l$ differs by placing the inverse of the mean on the left of $g$ in the integration.
It can be shown that these two means have the same value and the covariances are related by an adjoint representation: $\Sigma_r=[Ad(\mu)]\Sigma_l [Ad(\mu)]^T$.

% We continue to use the following notation to denote the averaging of a function throughout the paper,
% \begin{equation}
%     \langle \varphi(g) \rangle \doteq \int_G \varphi(g) f(g)dg,
% \end{equation}
% where $\varphi(g)$ is the function to be averaged and $f(g)$ is the probability distribution.

An important unimodular group is the special orthogonal group, $SO(3)$.
The basis of Lie algebra we use are
$$
  \setlength{\arraycolsep}{2pt}
  \renewcommand{\arraystretch}{1}
{E}_1=\begin{pmatrix}
0 & 0 & 0 \\
0 & 0 & -1\\
0 & 1 & 0
\end{pmatrix}, \,
{E}_2=\begin{pmatrix}
0 & 0 & 1 \\
0 & 0 & 0\\
-1 & 0 & 0
\end{pmatrix}, \\ \,
{E}_3=\begin{pmatrix}
0 & -1 & 0 \\
1 & 0 & 0\\
0 & 0 & 0
\end{pmatrix}.
$$
The closed-form expressions of the exponential, logarithm, right, and left Jacobian that uses this basis can be found in \cite{chirikjian2011stochastic}.
The exponential coordinate of $SO(3)$ is almost global and its domain is a solid ball $D=\{ \boldsymbol{x} \in \mathbb{R}^3 \, | \, \lVert\boldsymbol{x}\rVert_2<\pi \}$.
The inverse of the left/right Jacobian also has a closed-form expression \cite{chirikjian2011stochastic,park1991optimal}:
\begin{equation}\label{eq:left_Jacobian_SO3}
J_l^{-1}(\boldsymbol{x})=\mathbb{I}-\frac{1}{2}X+\left( \frac{1}{||\boldsymbol{x}||}-\frac{1+\cos ||\boldsymbol{x}||}{2||\boldsymbol{x}|| \sin ||\boldsymbol{x}||} \right )X^2,
\end{equation}
\begin{equation}\label{eq:right_Jacobian_SO3}
J_r^{-1}(\boldsymbol{x})=\mathbb{I}+\frac{1}{2}X+\left( \frac{1}{||\boldsymbol{x}||}-\frac{1+\cos ||\boldsymbol{x}||}{2||\boldsymbol{x}|| \sin ||\boldsymbol{x}||} \right )X^2,   
\end{equation}
which will be used in the following sections.

% We denote the adjoint representation at $g\in G$ by $Ad(g)$, which acts on an element $X$ in the Lie algebra by $Ad(g)X = gXg^{-1}$. The matrix form of the adjoint operator is given through the relation: $(Ad(g)X)^\vee = [Ad(g)]X^\vee = [Ad(g)]\boldsymbol{x}$ where $[Ad(g)]\in\mathbb{R}^{N\times N}$ and so that,
% \begin{equation}\label{eq:Adg}
%     [Ad(g)] = \left[\left(gE_1g^{-1}\right)^\vee,\cdots,\left(gE_Ng^{-1}\right)^\vee\right].
% \end{equation}
% The group $G$ is unimodular if and only if $|\text{det}\;[Ad(g)]| = 1$. For a specific element $k\in G$ and for every $g\in G$, we can write,
% \begin{multline*}
% \int_G f(g)\;dg = \int_G f(k\,\circ\,g)\;dg \\= \int_G f(g\,\circ \,k)\;dg = \int_G f(g^{-1})\;dg.
% \end{multline*}

% Now, we define the ``little ad'' operator. For $X,Y\in\mathcal{G}$, we can define the Lie bracket $[X,Y] = XY - YX$. This can also be written as $ad(X)Y = [X,Y]$. The matrix representation of the operator is then, $(ad(X)Y)^\vee = [ad(X)]Y^\vee$, which is given as $[ad(X)]\boldsymbol{y}$ where $[ad(X)]\in\mathbb{R}^{N\times N}$. Therefore,
% \begin{equation}\label{eq:littlead}
% [ad(X)] = \left[[X,E_1]^\vee,\cdots,[X,E_N]^\vee\right].
% \end{equation}

We close this section by introducing \emph{direct product group}, which is a group constructed by two groups.
For groups $(G,\circ)$ and $(\mathbb{R}^N,+)$, we define the it to be $(K,\bullet)\doteq (G,\circ)\times(\mathbb{R}^N,+)$. 
Each element $k\in K$ can be written as $k=(g,\boldsymbol{v})$ where $g\in G$ and $\boldsymbol{v}\in R^N$.
The group operation for $K$ is defined by
$$
k_1 \bullet k_2 = (g_1,\boldsymbol{v}_1)\bullet (g_2,\boldsymbol{v}_2)\doteq (g_1\circ g_2,\boldsymbol{v}_1+\boldsymbol{v}_2).
$$
It can be understood as performing products separately for elements from different groups.
When $G$ has a matrix representation, we can represent each group element of $K$ by matrix and perform abstract group operations by matrix multiplication:
$$
\begin{aligned}
[k_1][k_2]&\doteq \begin{pmatrix}
    [g_1] & \mathbb{O} \\
    \mathbb{O} & \text{diag}(e^{\boldsymbol{v_1}})
\end{pmatrix}
\begin{pmatrix}
    [g_2] & \mathbb{O} \\
    \mathbb{O} & \text{diag}(e^{\boldsymbol{v_2}})
\end{pmatrix}\\
&= 
\begin{pmatrix}
    [g_1][g_2] & \mathbb{O} \\
    \mathbb{O} & \text{diag}(e^{\boldsymbol{v_1}}) \text{diag}(e^{\boldsymbol{v_2}})
\end{pmatrix}\\
&=
\begin{pmatrix}
    [g_1\circ g_2] & \mathbb{O} \\
    \mathbb{O} & \text{diag}(e^{\boldsymbol{v_1}+\boldsymbol{v_2}})
\end{pmatrix}\\
&=[k_1 \bullet k_2],
\end{aligned}
$$
where $[\cdot]$ means matrix representation and $\text{diag}(e^{\boldsymbol{v}})$ stands for a diagonal matrix whose $i$th diagonal element is $e^{v_i}$.

% The \emph{semi-direct product} $(H,\square) \doteq (G,\circ)\ltimes(\mathbb{R}^n,+)$ is similarly defined, with each element $h\in H$ written as $h = (g,\boldsymbol{v})$ where $g\in G$ and $\boldsymbol{v}\in\mathbb{R}^n$. 
% An additional requirement for defining a semi-direct group is that the group $G$ should have a group action $g\cdot \boldsymbol{v}\in \mathbb{R}^n$ \cite{chirikjian2016harmonic}.
% For a matrix Lie group, whose elements are $n\times n$ matrices, the group action is simply matrix multiplication.
% The group operation for $H$ is defined as
% $$
% h_1\,\square\,h_2 = (g_1,\boldsymbol{v}_1)\,\square\, (g_2,\boldsymbol{v}_2) \doteq (g_1\circ g_2,g_1\cdot\boldsymbol{v}_2 + \boldsymbol{v}_1).
% $$
% The elements of this group also have a matrix representation
% $$
% [h]=\begin{pmatrix}
%  [g] & \boldsymbol{v}\\
%  \bm{0}^T & 1
% \end{pmatrix}.
% $$
% It is easy to check the matrix multiplication is consistent with the abstract group operation.

\subsection{Converting stochastic differential equations to Fokker-Planck equations on unimodular groups}\label{sec:back_B}
In this section, we consider a unimodular group $G$ with Lie algebra $\mathcal{G}$. Assume that a group element $g\in G$ can be parameterized by $g = g(\boldsymbol{q})$ where $\boldsymbol{q}\in\mathbb{R}^N$. 
% Unless otherwise stated, we use the Einstein summation convention over repeated indices to keep the notation economical throughout the paper.

A ``right'' Stratonovich stochastic differential equation 
% (which is invariant under left shifts when $\boldsymbol{h}$ and $H$ are independent of $g$)
has the form:
\begin{equation}\label{eq:GrpStraton}
    (g^{-1}dg)^\vee =\boldsymbol{h}(g(\boldsymbol{q}),t)dt + H(g(\boldsymbol{q}),t)\;\circledS\; d\boldsymbol{W},
\end{equation}
where $(g^{-1}dg)^\vee \doteq J_r(\boldsymbol{q})d\boldsymbol{q}$.
The $\circledS$ symbol denotes that the equation is interpreted in the Stratonovich sense, which is used for physical systems and is different from the It\^{o}'s SDE \cite{gardiner1985handbook}.
% The drift vector $\boldsymbol{h}(g(\boldsymbol{q}),t) = \tilde{\boldsymbol{h}}(\boldsymbol{q},t)$ and diffusion matrix $H(g(\boldsymbol{q}),t) = \tilde{H}(\boldsymbol{q},t)$ are functions of the group parameters. 
Then, $d\boldsymbol{W}$ is a Wiener white noise increment, which satisfies $\langle d\boldsymbol{W}\rangle = \boldsymbol{0}$ and $\langle dW_idW_j\rangle = \delta_{ij}dt$. Similarly, a ``left'' stochastic differential equation has the form:
\begin{equation}\label{eq:LeftGrpStraton}
    (dg\;g^{-1})^\vee = \boldsymbol{h}(g(\boldsymbol{q}),t)dt + H(g(\boldsymbol{q}),t)\;\circledS\; d\boldsymbol{W}.
\end{equation}
We now present a theorem that relates the Fokker-Planck equation evolving on $G$ with the right Stratonovich stochastic differential equation.
In these theorems as well as in the results presented in subsequent sections, Einstein summation notation is employed.
That is $v_i w_i$ is a shorthand for $\sum_i v_i w_i$.
% this will be used later to prove an important theorem for Fokker-Planck equations on (co-)tangent bundles. 

\begin{theorem}\label{thm:Thm1}
The Fokker-Planck equation for the probability density function $f(g,t)$ corresponding to the right Stratonovich stochastic differential equation of the form in (\ref{eq:GrpStraton}) evolves on the unimodular group $G$ as:
\begin{equation}\label{eq:rightFPE_correct_statement}
    \frac{\partial f}{\partial t} +  E^r_m\left(\left[h_m +\frac{1}{2}{H}_{nj}E^r_n(H_{mj}) \right]f\right) =\frac{1}{2}E^r_mE^r_p(H_{mn}H^T_{np}f),
\end{equation}
where $E^r_i$ is the right Lie derivative in (\ref{eq:rightDerivativeNP}). 
\end{theorem}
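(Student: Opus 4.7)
The plan is to work through a coordinate system $g = g(\boldsymbol{q})$, convert the group-level Stratonovich SDE into a Euclidean Stratonovich SDE for the coordinates, apply the classical Euclidean Stratonovich-to-It\^o correction and forward Kolmogorov equation, and then translate the resulting coordinate PDE back into the intrinsic form involving right Lie derivatives, using unimodularity to dispose of the Jacobian determinant factors cleanly.

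Step one is to push (\ref{eq:GrpStraton}) down to coordinates. Using $(g^{-1}dg)^\vee = J_r(\boldsymbol{q})d\boldsymbol{q}$ gives
\begin{equation*}
d\boldsymbol{q} = \boldsymbol{a}(\boldsymbol{q},t)\,dt + B(\boldsymbol{q},t)\,\circledS\, d\boldsymbol{W}, \qquad \boldsymbol{a} \doteq J_r^{-1}\boldsymbol{h},\ B \doteq J_r^{-1}H.
\end{equation*}
The classical Stratonovich-to-It\^o rule adds the drift correction $\tfrac{1}{2}B_{kj}\,\partial_k B_{ij}$, and the standard Euclidean Fokker-Planck equation for the Lebesgue density $\tilde{f}(\boldsymbol{q},t)$ reads
\begin{equation*}
\partial_t \tilde{f} = -\partial_i\!\left[\big(a_i + \tfrac{1}{2}B_{kj}\partial_k B_{ij}\big)\tilde{f}\right] + \tfrac{1}{2}\partial_i\partial_k(B_{ij}B_{kj}\tilde{f}).
\end{equation*}

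Step two is the translation back to $G$. Since $dg = |\det J_l|\,d\boldsymbol{q}$ and the group is unimodular ($|\det J_l| = |\det J_r| \doteq |J|$), the density with respect to Haar measure satisfies $\tilde{f}(\boldsymbol{q},t) = |J|\, f(g(\boldsymbol{q}),t)$. Inverting (\ref{eq:rightDerivativeJT}) gives $\partial_j\phi = [J_r]_{ij}(E^r_i\phi)$, which lets us rewrite every $\partial_j$ in the coordinate FPE as a combination of right Lie derivatives on $f$ together with terms involving $E^r_i(|J|)$. The decisive point is that unimodularity forces the right-invariant vector fields $E^r_i$ to be divergence-free with respect to the Haar measure, so the Jacobian-factor contributions cancel pairwise rather than leaving a residual drift. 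Once this cancellation is carried out, the Stratonovich correction $\tfrac{1}{2}B_{kj}\partial_k B_{ij}$ reorganizes into $\tfrac{1}{2}H_{nj}E^r_n(H_{mj})$, and the second-order term reorganizes into $\tfrac{1}{2}E^r_m E^r_p(H_{mn}H^T_{np}f)$, yielding (\ref{eq:rightFPE_correct_statement}).

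The main obstacle is the bookkeeping of Jacobian factors: one must show carefully that all derivatives of $|J|$ and of $J_r^{-1}$ produced by the change of variables assemble into quantities that are absorbed by the definition of $E^r_i$, rather than surviving as spurious first-order terms. A cleaner route that sidesteps much of this algebra is the weak formulation: apply It\^o's lemma to $\phi(g(t))$ for an arbitrary compactly supported smooth $\phi:G\to\mathbb{R}$, observe that the noncommutative chain rule on $G$ gives $d\phi(g) = (E^r_i\phi)(g)\,(g^{-1}dg)^\vee_i + \tfrac{1}{2}(E^r_iE^r_j\phi)(g)\,(g^{-1}dg)^\vee_i(g^{-1}dg)^\vee_j$ in It\^o form, take expectations to produce $\tfrac{d}{dt}\int_G \phi f\,dg$, and then apply the integration-by-parts identity $\int_G (E^r_i\phi)\psi\,dg = -\int_G \phi(E^r_i\psi)\,dg$, whose validity is precisely the statement of unimodularity. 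Setting $\phi$ arbitrary then reads off the PDE in its invariant form.
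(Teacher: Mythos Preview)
Your primary coordinate route is essentially the paper's proof: both push the SDE to coordinates via $d\boldsymbol{q}=J_r^{-1}\boldsymbol{h}\,dt+J_r^{-1}H\,\circledS\,d\boldsymbol{W}$, apply the Stratonovich--to--It\^o correction, write the Fokker--Planck equation, and then invoke the unimodular identity $\partial_{q_i}\!\big(|J|[J_r]^{-1}_{ij}\big)=0$ (which is exactly your ``divergence-free'' statement) to absorb the Jacobian factors and recover the $E^r$-form. The only cosmetic difference is that the paper writes the FPE directly for the Haar density with the $|J|^{-1}\partial_i(|J|\cdots)$ structure built in, whereas you write the Lebesgue-density FPE and substitute $\tilde f=|J|f$; these are equivalent rearrangements. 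One terminology slip: in the paper's convention $E^r_i$ is \emph{left}-invariant (the superscript records where $\exp(tE_i)$ sits, not the invariance), so ``right-invariant vector fields $E^r_i$'' should be corrected, though the mathematics is unaffected.

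Your weak-formulation alternative is not in the paper and is a genuinely different route. It replaces the explicit Jacobian bookkeeping by the single integration-by-parts identity $\int_G(E^r_i\phi)\psi\,dg=-\int_G\phi(E^r_i\psi)\,dg$, which is indeed the operational content of unimodularity; the generator is read off from the group-level It\^o expansion of $\phi(g(t))$ and then dualized. This is cleaner conceptually and avoids the tedious tracking of $\partial J_r^{-1}$ terms that the paper must do by hand. The price is that you must first justify the group It\^o formula you wrote: as stated, the quadratic term $\tfrac12(E^r_iE^r_j\phi)(g^{-1}dg)^\vee_i(g^{-1}dg)^\vee_j$ only appears after converting the Stratonovich drift to It\^o, so that step (and the accompanying $\tfrac12 H_{nj}E^r_n(H_{mj})$ drift correction) needs to be spelled out rather than asserted.
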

\begin{proof}
The proof is in Appendix \ref{proof:FPE}.
\end{proof}

\begin{corollary}\label{crlly:Corollary1}
If the noise coefficient matrix $H$ only depends on time, \textit{i}. \textit{e}. $H=H(t)$, the right Fokker-Planck equation for the right Stratonovich stochastic differential equation (\ref{eq:GrpStraton}) is
% If the noise coefficient $H$ is independent of generalized momentum, \textit{i}. \textit{e}. $H=H(g(\boldsymbol{q}),t)$, 
% or $G$ is the (co-)tangent bundle group and (\ref{eq:GrpStraton}) describes phase space dynamics where noise only exists in the (co-)tangent (generalised momentum) space, the Fokker-Planck equation for the `right' Stratonovich stochastic differential equation in (\ref{eq:GrpStraton}) is,
\begin{equation}\label{eq:rightFPE_correct_statement_simpl}
    \frac{\partial f}{\partial t} +  E^r_m\left(h_mf\right) =\frac{1}{2}E^r_mE^r_l(H_{mn}H^T_{nl}f).
\end{equation}
\end{corollary}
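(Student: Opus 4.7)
The plan is to obtain the corollary as a direct specialization of Theorem~\ref{thm:Thm1}. I would simply substitute the hypothesis $H=H(t)$ into equation~(\ref{eq:rightFPE_correct_statement}) and show that the extra drift correction term $\tfrac{1}{2}H_{nj}E^r_n(H_{mj})$ vanishes identically, while the diffusion term remains intact up to relabeling the dummy index $p\mapsto l$.

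The key observation is the definition of the right Lie derivative in~(\ref{eq:rightDerivativeNP}): for any differentiable function $\varphi:G\to\mathbb{R}$,
\begin{equation*}
(E^r_n\varphi)(g)=\left.\frac{d}{dt}\right|_{t=0}\varphi(g\circ\exp(tE_n)).
\end{equation*}
If $\varphi$ is constant on $G$, i.e. depends only on $t$, then the curve $t\mapsto\varphi(g\circ\exp(tE_n))$ is itself independent of $t$ (at fixed time), so its derivative at $t=0$ is zero. Applying this to $\varphi(g)=H_{mj}(t)$ for each fixed $t$ gives $E^r_n(H_{mj})\equiv 0$, so the correction term $\tfrac{1}{2}H_{nj}E^r_n(H_{mj})f$ inside the first $E^r_m$ vanishes.

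For the diffusion term on the right-hand side, since $H_{mn}H^T_{np}$ depends only on $t$ it can be pulled outside the two right Lie derivatives (which act only on functions of $g$), yielding
\begin{equation*}
\tfrac{1}{2}E^r_m E^r_p(H_{mn}H^T_{np}f)=\tfrac{1}{2}H_{mn}(t)H^T_{np}(t)\,E^r_m E^r_p f.
\end{equation*}
Renaming the summation index $p$ to $l$ for consistency with the statement gives the claimed form~(\ref{eq:rightFPE_correct_statement_simpl}).

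There is no real obstacle here; the only subtlety worth flagging is confirming that $E^r_n$ indeed annihilates functions of $t$ alone, which follows immediately from its definition as a directional derivative along group translations and does not interact with the time variable. Hence the corollary reduces to a one-line substitution into Theorem~\ref{thm:Thm1}.
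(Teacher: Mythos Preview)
Your proposal is correct and takes essentially the same approach as the paper: both reduce to observing that $E^r_n(H_{mj})=0$ when $H$ depends only on time, so the correction term in Theorem~\ref{thm:Thm1} drops out. The paper invokes the Jacobian representation~(\ref{eq:rightDerivativeJT}) together with $\partial H_{mj}/\partial q_k=0$, whereas you argue directly from the definition~(\ref{eq:rightDerivativeNP}); these are equivalent, and your remark about pulling $H_{mn}H^T_{np}$ outside the Lie derivatives is an unnecessary (though correct) extra step, since the diffusion term already matches after relabeling $p\mapsto l$.
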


\begin{proof}
% To see this, note from Theorem \ref{thm:Thm1} that 
When $H$ is not a function of $g(\bm{q})$, we have $\partial {H_{mj}(t)}/\partial q_k = 0$. 
The calculation of the Lie derivative $E_n^r(H_{mj})$ in Theorem \ref{thm:Thm1} using the Jacobian matrix indicates the result.
\end{proof}
For the left Fokker-Plack equation, we have a similar result:
\begin{theorem}\label{thm:Thm2}
The Fokker-Planck equation for the probability density function $f(g,t)$ corresponding to the left Stratonovich stochastic differential equation of the form in (\ref{eq:LeftGrpStraton}) evolves on the unimodular group $G$ as:
\begin{equation}\label{eq:leftFPE_correct}
    \frac{\partial f}{\partial t} -  E^l_m\left(\left[h_m + \frac{1}{2}{H}_{nj}E^l_n(H_{mj})\right]f\right) =\frac{1}{2}E^l_mE^l_p(H_{mn}H^T_{np}f),
\end{equation}
where $E^l_i$ is the left Lie derivative in (\ref{eq:leftDerivativeNP}).
\end{theorem}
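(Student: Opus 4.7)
The plan is to derive (\ref{eq:leftFPE_correct}) from Theorem \ref{thm:Thm1} via the inversion involution $\iota:g\mapsto g^{-1}$, which on a unimodular group preserves the Haar measure and interchanges the left and right Lie structures.

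First, I would change variables by $\tilde g\doteq g^{-1}$ in the left SDE (\ref{eq:LeftGrpStraton}). Because Stratonovich calculus obeys the ordinary chain rule, $d(g^{-1})=-g^{-1}(dg)g^{-1}$, giving
\begin{equation*}
(\tilde g^{-1}\,d\tilde g)^{\vee}=-(dg\,g^{-1})^{\vee}=-\boldsymbol{h}(\tilde g^{-1},t)\,dt-H(\tilde g^{-1},t)\,\circledS\,d\boldsymbol{W}.
\end{equation*}
This is exactly a right Stratonovich SDE of the form (\ref{eq:GrpStraton}) for $\tilde g$, with effective coefficients $\tilde{\boldsymbol{h}}=-\boldsymbol{h}\circ\iota$ and $\tilde H=-H\circ\iota$. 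Unimodularity makes the Haar measure $\iota$-invariant, so the density of $\tilde g$ with respect to the same Haar measure is $\tilde f(\tilde g,t)=f(\tilde g^{-1},t)$. Applying Theorem \ref{thm:Thm1} to $\tilde g$ then yields a right Fokker--Planck equation for $\tilde f$.

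The second step is to pull that equation back to $g$ using the intertwining identity $E^r_i(\phi\circ\iota)(g)=(E^l_i\phi)(g^{-1})$, which is immediate from the defining formulas (\ref{eq:rightDerivativeNP}) and (\ref{eq:leftDerivativeNP}) together with the identity $(g\exp(tE_i))^{-1}=\exp(-tE_i)g^{-1}$. Iterating this identity converts each $E^r$ acting on a function of the form $\phi\circ\iota$ into the corresponding $E^l$ acting on $\phi$, in the same index order, so no reordering of non-commuting derivatives is required. Combining the sign flips inherited from $\tilde{\boldsymbol{h}}$ and $\tilde H$ with the signs produced by the intertwining identity, noting that the two minus signs from $\tilde H$ cancel in the symmetric diffusion term $\tilde H\tilde H^T$, and using the symmetry of $HH^T$ to relabel dummy indices, produces (\ref{eq:leftFPE_correct}) after evaluating at $\tilde g=g^{-1}$.

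The main obstacle is the sign bookkeeping in the Stratonovich correction $\tfrac{1}{2}H_{nj}E^l_n(H_{mj})$, where two independent sources of minus signs must be tracked: one from $\tilde H=-H\circ\iota$, and another from applying the intertwining identity to the composite $\tilde H_{mj}=-H_{mj}\circ\iota$. As an alternative that avoids the inversion machinery entirely, one may mirror the appendix proof of Theorem \ref{thm:Thm1} line by line, replacing the right-invariant increment $g\circ\exp(tE_i)$ by the left-invariant increment $\exp(-tE_i)\circ g$ and the right Jacobian $J_r$ in (\ref{eq:JR}) by the left Jacobian $J_l$ in (\ref{eq:JL}); the overall minus sign in front of $E^l_m$ in (\ref{eq:leftFPE_correct}) then arises automatically from the minus sign in the definition (\ref{eq:leftDerivativeNP}) of the left Lie derivative.
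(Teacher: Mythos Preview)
Your proposal is correct. The paper's own proof is literally one sentence---``The proof follows the same procedure as that for Theorem~\ref{thm:Thm1}''---which is precisely your alternative route at the end: rerun the Appendix~\ref{proof:FPE} computation with $J_r$ replaced by $J_l$ and with the sign convention in (\ref{eq:leftDerivativeJT}) producing the overall minus in front of $E^l_m$.

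Your primary route via the inversion $\iota:g\mapsto g^{-1}$ is a genuinely different argument. It is more conceptual: rather than redoing the parametric Fokker--Planck computation, you reduce Theorem~\ref{thm:Thm2} to Theorem~\ref{thm:Thm1} plus the single intertwining identity $E^r_i(\phi\circ\iota)=(E^l_i\phi)\circ\iota$. This is essentially the mechanism behind the paper's later Theorem~\ref{thm:l_f_connection} (which, however, is stated only for $H=H(t)$, so the Stratonovich correction is absent there). What your approach buys is economy and a clear structural explanation for why the left and right equations are mirror images; what the paper's direct-mirroring buys is that no new idea is needed and the sign bookkeeping is mechanical. Your own warning about the correction term $\tfrac{1}{2}H_{nj}E^l_n(H_{mj})$ is well placed: under inversion one picks up a minus from $\tilde H=-H\circ\iota$ and another from the intertwining applied to $E^r_n(\tilde H_{mj})$, and these must be reconciled with the overall $-E^l_m$ prefactor to land on the sign in (\ref{eq:leftFPE_correct}); this is the one place where the inversion route is more error-prone than the direct computation.
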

\begin{proof}
The proof follows the same procedure as that for Theorem \ref{thm:Thm1}.
% , using the relation involving the left Jacobian in  (\ref{eq:detJIdentity}) and the expression in (\ref{eq:leftDerivativeJT}).
\end{proof}
% \begin{corollary}\label{crlly:Corollary2}
% If $H$ is configuration-independent or $G$ is the cotangent bundle group and (\ref{eq:LeftGrpStraton}) describes phase space dynamics where noise only exists in the (co-)tangent (generalised momentum) space, the Fokker-Planck equation for the `left' Stratonovich stochastic differential equation in (\ref{eq:LeftGrpStraton}) is,
% \begin{equation}\label{eq:leftFPE_correct_statement_simpl}
%     \frac{\partial f}{\partial t} -  E^L_m\left(h_mf\right) =\frac{1}{2}E^L_mE^L_l(H_{mn}H^T_{nl}f).
% \end{equation}
% \end{corollary}
% \begin{proof}
% This can be shown using an argument similar to that in Corollary \ref{crlly:Corollary1}.
% \end{proof}
\begin{corollary}\label{crlly:Corollary2}
If the noise coefficient matrix $H$ only depends on time, the left Fokker-Planck equation for the left Stratonovich stochastic differential equation (\ref{eq:LeftGrpStraton}) is
\begin{equation}\label{eq:leftFPE_correct_statement_simpl}
    \frac{\partial f}{\partial t} -  E^l_m\left(h_mf\right) =\frac{1}{2}E^l_mE^l_p(H_{mn}H^T_{np}f).
\end{equation}
\end{corollary}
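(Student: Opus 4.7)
The plan is to mirror the proof of Corollary \ref{crlly:Corollary1} exactly, but with the left Lie derivative $E^l_n$ in place of the right one $E^r_n$. The only term in Theorem \ref{thm:Thm2} that distinguishes the general FPE from the simplified one is the drift correction $\frac{1}{2}H_{nj}E^l_n(H_{mj})$, so the entire argument reduces to showing that this term vanishes under the hypothesis $H = H(t)$.

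First, I would recall the parametric expression for the left Lie derivative given in (\ref{eq:leftDerivativeJT}):
\begin{equation*}
(E^l_n H_{mj})(g(\boldsymbol{q})) = -\sum_{k=1}^{N}[J_l(\boldsymbol{q})]^{-T}_{nk}\frac{\partial H_{mj}(g(\boldsymbol{q}),t)}{\partial q_k}.
\end{equation*}
Since $H$ is assumed to depend only on time, $\partial H_{mj}(t)/\partial q_k = 0$ for every $k$, and therefore $E^l_n(H_{mj}) \equiv 0$ for all $n, m, j$. Equivalently, one can argue directly from the definition (\ref{eq:leftDerivativeNP}): the function $g \mapsto H_{mj}(t)$ is constant on $G$, so its directional derivative $\frac{d}{dt}\big|_{t=0} H_{mj}(t_0)$ along any one-parameter subgroup is zero.

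Substituting $E^l_n(H_{mj}) = 0$ into the left Fokker-Planck equation (\ref{eq:leftFPE_correct}) of Theorem \ref{thm:Thm2} immediately removes the correction term inside the first-order Lie derivative, yielding (\ref{eq:leftFPE_correct_statement_simpl}). No further manipulation is needed because the diffusion term on the right-hand side is already in the desired form and does not involve Lie derivatives of $H$ itself.

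There is essentially no obstacle: the argument is a one-line substitution after invoking Theorem \ref{thm:Thm2}, so the proof is symmetric to the right-handed case in Corollary \ref{crlly:Corollary1}. The only thing to be mindful of is that the Lie derivative being zero is a statement about $g$-dependence, not $t$-dependence, which is why the purely time-dependent $H(t)$ makes the drift-correction term vanish even though $H$ is itself changing in time.
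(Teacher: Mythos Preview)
Your proposal is correct and matches the paper's own approach exactly: the paper simply remarks that the result follows by the same argument as Corollary~\ref{crlly:Corollary1}, i.e., that $\partial H_{mj}(t)/\partial q_k = 0$ forces $E^l_n(H_{mj}) = 0$ via the Jacobian formula, which collapses (\ref{eq:leftFPE_correct}) to (\ref{eq:leftFPE_correct_statement_simpl}).
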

\begin{proof}
This can be shown using an argument similar to that in Corollary \ref{crlly:Corollary1}.
\end{proof}

% Both theorems suggest that It\^{o} and Stratonovich formulations of stochastic differential equations yield equivalent Fokker-Planck equations on the (co-)tangent bundle group (if the configuration-dependent noise forces only (co-)tangent space variables) that model stochastic mechanical systems in (state) phase space. This is because, we can obtain (\ref{eq:leftFPE_correct_statement_simpl}) and (\ref{eq:rightFPE_correct_statement_simpl} by ignoring the $\circledS$ in (\ref{eq:GrpStraton}) and (\ref{eq:LeftGrpStraton}) and constructing the respective Fokker-Planck equations assuming the stochastic differential equations were It\^{o}.

\section{Propagating Uncertainty by Fokker-Planck Equation on Unimodular Lie Groups}\label{sec:EoM_FPE}

In this section, we present a general theory that describes the propagation of the mean and covariance of the solution to a Fokker-Planck equation (FPE) on unimodular Lie groups that have a surjective exponential map.
An exact propagation formula for a left FPE is derived in Section \ref{sec:general_EOM}.
A Gaussian distribution on the group is constructed using the propagated mean and covariance and can serve as an approximate solution to the FPE.
Two approximate propagation methods that employ numerical quadrature and expansion of moments are discussed with a closed-form second-order propagation formula presented.
In section \ref{sec:FPE_connection}, we state the connection between the solution to a right and a left FPE, which can be used to convert the propagation equations for a left equation to a right equation.
We show that the propagation step of traditional Kalman filters also falls within the umbrella of our theory in Section \ref{sec:EKF}.
% Then we re-write the second-order formula for the cotangent bundle group in a new form in Sec. \ref{sec:EOM_CBG}.
% Finally, a second-order, a fourth-order, and a pseudo-fourth-order formula for the direct product group are derived in Sec. \ref{sec:EOM_DPG}.

\subsection{Propagation equations for left Fokker-Planck equation} \label{sec:general_EOM}
Suppose we have a left stochastic differential equation (\ref{eq:LeftGrpStraton}) and its corresponding left Fokker-Planck equation on a matrix Lie group $G$,
\begin{equation}\label{eq:leftFPE_EOM}
\left \{
    \begin{aligned}
    &\frac{\partial f}{\partial t} \,=E^l_i\left(h_i f\right) + \frac{1}{2}E^l_iE^l_j(H_{ik}H^T_{kj}f)\\
    &f(g,0)=\delta\left(g\right)
    \end{aligned}
    \right.
\end{equation}
where we assume $h_i(g,t)$ is a function of $G$ and time, $H_{ij}(t)$ is a function of time.
The Dirac delta function $\delta(g)$ is defined by
\begin{equation}
    \int_G \delta(g)dg=1 \quad\text{and}\quad\int_G \varphi(g)\delta(g)dg=\varphi(e),
\end{equation}
where $\varphi(g)$ is an arbitrary well-behaved function and $e$ is the identity element of the group.
We want to extract the right group-theoretic mean $\mu(t)$ and covariance $\Sigma(t)$ of the distribution $f(g,t)$ as defined in (\ref{eq:GroupMu}) and (\ref{eq:GroupCov}).
% \begin{definition} \label{def:r_mean_cov}
%   The ``right'' group-theoretic mean $\mu$ and covariance $\Sigma$ of a distribution $f(g)$ are defined by the following equations:
%   \begin{equation}\label{eq:GroupMu}
% \int_G \log^\vee(g \circ \mu^{-1})f(g)dg \;\doteq\; \boldsymbol{0},
% \end{equation}
% and,
% \begin{equation}\label{eq:GroupCov}
% \Sigma \;\doteq\; \int_G [\log^\vee(g \circ \mu^{-1})][\log^\vee(g \circ \mu^{-1})]^T f(g)dg.
% \end{equation}
% \end{definition}

The following steps summarize the procedure for deriving the propagation equations: 
(1) Re-center the distribution to the mean by change of variable, 
(2) Use integration by parts to derive the exact propagation formula,
(3) Exploit quadrature methods or expansion of moments to give approximate formulas.

\subsubsection{Re-centering} \label{sec:recenter}
In this step, we define a new distribution $\rho(k,t)\doteq f(k\circ \mu(t),t)$, for which the right group-theoretic mean is always the identity of the group, \textit{i}. \textit{e}.
\begin{equation} \label{eq:rho_mean}
\int_G \log^{\vee} (k) \rho(k)dk \equiv \boldsymbol{0}.
\end{equation}
This technique is adopted by many previous works for propagating group-theoretic mean and covariance \cite{wang2006error,long2013banana,park2010path,jayaraman2023lietheoretic}.
We can reformulate the Fokker-Planck equation (\ref{eq:leftFPE_EOM}) by $\rho(k,t)$ using the following result:
\begin{theorem} \label{thm:change_var}
If $f(g,t)$ is the solution to (\ref{eq:leftFPE_EOM}) and $\rho(k,t)\doteq f(k\circ \mu(t),t)$, where $\mu(t)$ is the right group-theoretic mean of $f(g,t)$, then $\rho(k,t)$ satisfies the following equation:
\begin{equation} \label{eq:FPE_rho}
    \frac{\partial \rho}{\partial t}=(\dot \mu \mu^{-1})^{\vee}_i \cdot E^r_i\rho+E_i^l({h}^c_i \rho)+\frac{1}{2}E_i^l E_j^l ({H}_{ik}{H}^T_{kj} \rho),
\end{equation}    
where $\boldsymbol{h}^c(k,t)\doteq \boldsymbol{h}(k\circ \mu(t),t)$.
\end{theorem}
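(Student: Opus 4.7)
My plan is to apply the chain rule to $\rho(k,t)=f(k\circ\mu(t),t)$, then handle the explicit time derivative of $f$ by invoking the left Fokker-Planck equation (\ref{eq:leftFPE_EOM}) and handle the $\mu(t)$-dependence by converting between right Lie derivatives of $f$ (evaluated at $k\mu$) and of $\rho$ (evaluated at $k$) via the adjoint representation. Splitting the time derivative,
\begin{equation*}
\frac{\partial\rho}{\partial t}(k,t) \;=\; \left.\frac{\partial f}{\partial t}\right|_{(k\mu(t),t)} \;+\; \left.\frac{d}{ds}f(k\mu(s),t)\right|_{s=t},
\end{equation*}
reduces the problem to treating these two pieces separately.

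For the first piece I would substitute the right-hand side of the left FPE (\ref{eq:leftFPE_EOM}) at $g=k\mu$. The key observation is that left Lie derivatives commute with right translation by a constant element: for any $\phi:G\to\mathbb{R}$ and any fixed $\mu\in G$,
\begin{equation*}
(E^l_i\phi)(k\mu) \;=\; \left.\frac{d}{ds}\phi\bigl(e^{-sE_i}k\mu\bigr)\right|_{s=0} \;=\; E^l_i\bigl(\phi\circ R_\mu\bigr)(k),
\end{equation*}
where $R_\mu$ denotes right translation. Applying this identity with $\phi=h_i f$ and with $\phi=H_{ik}H^T_{kj}f$ (using that $H$ depends only on $t$ so $H_{ik}H^T_{kj}\circ R_\mu = H_{ik}H^T_{kj}$, while $h_i\circ R_\mu = h^c_i$) converts the FPE right-hand side directly into the last two terms of (\ref{eq:FPE_rho}).

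For the second piece I would note that the curve $g(s)=k\mu(s)$ satisfies $g^{-1}\dot g=\mu^{-1}\dot\mu$, so the defining property (\ref{eq:rightDerivativeNP}) of the right Lie derivative yields
\begin{equation*}
\left.\frac{d}{ds}f(k\mu(s),t)\right|_{s=t} \;=\; (\mu^{-1}\dot\mu)^\vee_j\,(E^r_j f)(k\mu,t).
\end{equation*}
To match the target form $(\dot\mu\mu^{-1})^\vee_i E^r_i\rho$, I would compute $E^r_i\rho(k)$ directly: the perturbation $k\mapsto k\exp(sE_i)$ rewrites as $k\mu\mapsto k\mu\exp\!\bigl(s\,\mathrm{Ad}(\mu^{-1})E_i\bigr)$, which gives $(E^r_i\rho)(k)=[\mathrm{Ad}(\mu^{-1})]_{ji}(E^r_j f)(k\mu)$, hence $(E^r_j f)(k\mu)=[\mathrm{Ad}(\mu)]^T_{jl}(E^r_l\rho)(k)$. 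Substituting and using $[\mathrm{Ad}(\mu)](\mu^{-1}\dot\mu)^\vee=(\mu\cdot\mu^{-1}\dot\mu\cdot\mu^{-1})^\vee=(\dot\mu\mu^{-1})^\vee$ produces exactly the transport term in (\ref{eq:FPE_rho}).

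The main obstacle is the last adjoint bookkeeping step. The natural velocity of $g(t)=k\mu(t)$ in the right-invariant trivialization is $\mu^{-1}\dot\mu$, not $\dot\mu\mu^{-1}$, and the discrepancy is resolved only after re-expressing $(E^r f)(k\mu)$ in terms of $(E^r\rho)(k)$, which introduces the compensating $\mathrm{Ad}(\mu)$. Getting the transpose/inverse pattern of $\mathrm{Ad}$ right and confirming that the change-of-variable on $k$ does not interfere with the $E^l$ terms (since they perturb on the opposite side) is where careful indexing is essential; once these pass, the two pieces sum to precisely (\ref{eq:FPE_rho}).
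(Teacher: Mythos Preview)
Your proposal is correct and follows the same overall decomposition as the paper: split the time derivative into the explicit $\partial f/\partial t$ piece (handled via the left FPE and the right-translation invariance of $E^l_i$) and the $\mu(t)$-transport piece. The one tactical difference is in the transport term. You differentiate $f(k\mu(s),t)$ to get $(\mu^{-1}\dot\mu)^\vee_j(E^r_j f)(k\mu)$ and then push this through $\mathrm{Ad}(\mu)$ to convert $(E^r f)(k\mu)$ into $(E^r\rho)(k)$, arriving at $(\dot\mu\mu^{-1})^\vee_i E^r_i\rho$. The paper instead differentiates the inverse relation $f(g,t)=\rho(g\mu^{-1}(t),t)$ at fixed $g$, after first establishing the lemma $\mu(t+\tau)=\exp(\tau\,\dot\mu\mu^{-1}+O(\tau^2))\,\mu(t)$; this makes $g\mu^{-1}(t+\tau)=k\exp(-\tau\,\dot\mu\mu^{-1})$, so the perturbation lands directly as right-multiplication on $k$ and yields $(\dot\mu\mu^{-1})^\vee_i E^r_i\rho$ with no adjoint bookkeeping. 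Your route is a bit longer but more explicit about \emph{why} the left velocity $\dot\mu\mu^{-1}$ (rather than $\mu^{-1}\dot\mu$) appears; the paper's route is cleaner but hides that mechanism inside the lemma. Either way the $E^l$ terms go through identically, and your index/transpose handling of the adjoint step is correct.
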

\begin{proof}
    The proof is in Appendix \ref{proof:change_var}.
\end{proof}

\subsubsection{Integration by parts}
Since the mean of $\rho(k)$ is always the identity, we have the following equation:
$$
\int_G \boldsymbol{x}(k) \frac{\partial \rho(k,t)}{\partial t} dk=\frac{\partial}{\partial t}\int_G \boldsymbol{x} (k)\rho(k,t)dk=0,
$$
where $\boldsymbol{x}(k)=\log^{\vee} k$.
Multiplying both sides of (\ref{eq:FPE_rho}) by $\boldsymbol{x}(k)$ and integrating on $G$ yields
\begin{equation}
    \boldsymbol{0}\!=\!\int_G \! \boldsymbol{x} \!\left( \!(\dot \mu \mu^{-1})^{\vee}_i \! \cdot \! E^r_i\rho+E_i^l(h^c_i \rho)\!+\!\frac{1}{2} E_i^l E_j^l ({H}_{ik}{H}^T_{kj} \rho)\!\right)\! dk.
\end{equation}
When the function $\rho(k)$ is absolutely integrable, integration by parts \cite{chirikjian2011stochastic} can be performed as
\begin{equation}
    \begin{aligned}
    \boldsymbol{0}=-\int_G E^r_i(\boldsymbol{x}) \rho(k)dk \cdot \boldsymbol{e}_i^T (\dot \mu \mu^{-1})^{\vee}-\int_G E_i^l(\boldsymbol{x})({h}_i^c \rho)dk+\frac{1}{2}\int_G E_j^l E_i^l (\boldsymbol{x}) ({H}_{ik}{H}^T_{kj} \rho)dk.
    \end{aligned}
\end{equation}
Therefore, we have
\begin{equation}
    \begin{aligned}
    (\dot \mu \mu^{-1})^{\vee}=\langle E^r_i(\boldsymbol{x})\boldsymbol{e}_i^T  \rangle^{-1} \langle-{h}^c_i E_i^l(\boldsymbol{x}) +\frac{1}{2}{H}_{ik}{H}^T_{kj} E_j^l E_i^l (\boldsymbol{x})\rangle,
    \end{aligned}
\end{equation}
where $\langle \varphi\rangle \doteq \int_G \varphi(k)\rho(k,t)dk$.
The invertibility of $\langle E^r_i(\boldsymbol{x})\boldsymbol{e}_i^T  \rangle$ is guaranteed for a short time, which will be explained later.
The same calculation can be performed to obtain the propagation equation for covariance and higher moments.
The results are summarized as:
\begin{theorem} \label{thm:prop_all_moments}
For a short time, the mean, covariance, and higher moments of the distribution $\rho(k,t)$ in (\ref{eq:FPE_rho}) obey the following ordinary differential equations:
\begin{equation} \label{eq:exact_EOM}
\left \{
\begin{aligned}
       &(\dot \mu \mu^{-1})^{\vee}=\langle E^r_i(\boldsymbol{x})\boldsymbol{e}_i^T  \rangle^{-1} \langle \frac{1}{2}{H}_{ik}{H}^T_{kj} E_j^l E_i^l (\boldsymbol{x}) -{h}^c_i E_i^l(\boldsymbol{x})\rangle\\&
       \frac{d \langle x_m \rangle}{dt}=0\\
       &\frac{d\langle x_m x_n \rangle}{dt}= \langle \frac{1}{2}{H}_{ik}{H}^T_{kj} E^l_j E^l_i(x_m x_n )\!-\!(\dot{\mu}\mu^{-1})^{\vee}_i E^r_i(x_m x_n)\\
        & \qquad\qquad\qquad\qquad-{h}^c_i E^l_i(x_m x_n)\rangle \\
         &\qquad \vdots\\
        &\frac{d \langle x_{i_1}x_{i_2}\cdots x_{i_n}\rangle}{dt}  = \langle \frac{1}{2}{H}_{ik}{H}^T_{kj} E^l_j E^l_i(x_{i_1}x_{i_2}\cdots x_{i_n}) \\ &\qquad\qquad \qquad\qquad -(\dot{\mu}\mu^{-1})^{\vee}_i E^r_i(x_{i_1}x_{i_2}\cdots x_{i_n})\\
        &\qquad\qquad \qquad \qquad -{h}^c_i E^l_i(x_{i_1}x_{i_2}\cdots x_{i_n})\rangle \\
        &\qquad \vdots
\end{aligned}
\right.
\end{equation}
\end{theorem}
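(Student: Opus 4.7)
The plan is to derive every ODE in the system by exactly the same procedure already carried out in the text for the mean: (i) differentiate $\langle x_{i_1}\cdots x_{i_n}\rangle$ under the integral sign, (ii) replace $\partial\rho/\partial t$ using the re-centered Fokker-Planck equation (\ref{eq:FPE_rho}), and (iii) push every Lie derivative off of $\rho$ and onto the polynomial $x_{i_1}\cdots x_{i_n}$ by integration by parts on $G$. This last step relies crucially on the unimodularity of $G$, which makes both $E^r_i$ and $E^l_i$ skew-adjoint with respect to the bi-invariant Haar measure and introduces no Jacobian factor; combined with the assumed absolute integrability of $\rho$, no boundary term survives.

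For the case $n=1$, the defining property (\ref{eq:rho_mean}) that $\rho(\cdot,t)$ is always centered at the identity forces the left-hand side to vanish. Collecting the three resulting terms and inverting the coefficient matrix $\langle E^r_i(\boldsymbol{x})\boldsymbol{e}_i^T\rangle$ yields the first ODE in (\ref{eq:exact_EOM}), exactly as in the narrative preceding the theorem; the trivial identity $d\langle x_m\rangle/dt = 0$ is then merely a restatement of the same centering condition and appears as a compatibility check rather than an independent calculation. For $n \geq 2$ the left-hand side is nonzero, and since the coordinate map $\boldsymbol{x}(k) = \log^\vee k$ does not depend on time, no extra terms arise when differentiating under the integral. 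The substitution-and-parts procedure therefore produces the stated right-hand side with no new ingredients, only bookkeeping of Leibniz-rule expansions for the Lie derivatives acting on the products $x_{i_1}\cdots x_{i_n}$.

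The main obstacle will be justifying the invertibility of $\langle E^r_i(\boldsymbol{x})\boldsymbol{e}_i^T\rangle$ on some short time interval, which is asserted without proof. I would handle this by evaluating at $t=0$: since $\rho(\cdot,0) = \delta(k)$, the expectation collapses to the value of $E^r_i(\boldsymbol{x})$ at the identity, and by (\ref{eq:rightDerivativeJT}) together with $J_r(\boldsymbol{0}) = \mathbb{I}$ this value is $\delta_{ij}$, so the matrix coincides with the identity at $t=0$. Continuity of $\rho$ in $t$ and of $E^r_i(\boldsymbol{x})$ near the identity then extend invertibility to a short time interval. A secondary technical point is ensuring that the integration by parts is legal, which requires sufficient decay of $\rho$ and its Lie derivatives against the polynomial weights $x_{i_1}\cdots x_{i_n}$; on compact groups such as $SO(N)$ this is automatic, while on noncompact unimodular groups with surjective exponential map it should be imposed as a regularity condition on the initial data.
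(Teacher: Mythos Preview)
Your proposal is correct and follows essentially the same approach as the paper: multiply the re-centered Fokker--Planck equation (\ref{eq:FPE_rho}) by the monomial $x_{i_1}\cdots x_{i_n}$, integrate over $G$, and shift the Lie derivatives onto the monomial via integration by parts, exactly as the text does for $n=1$ before stating the theorem and then remarks that ``the same calculation can be performed'' for higher moments. Your treatment of the invertibility of $\langle E^r_i(\boldsymbol{x})\boldsymbol{e}_i^T\rangle$ via $J_r(\boldsymbol{0})=\mathbb{I}$ and continuity is also the paper's argument (given a few paragraphs later using the Taylor expansion of Lemma~\ref{jacobian_expansion}), and your remarks on decay conditions for integration by parts go slightly beyond what the paper makes explicit but are consistent with its assumptions.
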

In principle, the propagation of all moments $\langle x_{i_1}x_{i_2}...x_{i_n}\rangle$ can be obtained.
However, for a long time, the higher-order moments may go to infinity, which makes the propagation equations ill-defined.
So we only consider propagating for a short time.

To compute the RHS of (\ref{eq:exact_EOM}), a parametrized formula is needed.
The following lemma provides a way to compute the Lie derivative of $\boldsymbol{x}(k)$:
\begin{lemma} \label{lemma:lie_J}
    When using exponential coordinates to parameterize the group, the left and right Lie derivatives of $\boldsymbol{x}(k)=\log^{\vee}k$ are:
    \begin{equation}
        E^l_i(\boldsymbol{x})=-J_l^{-1}(\boldsymbol{x})\boldsymbol{e}_i,
    \end{equation}
    \begin{equation} E^r_i(\boldsymbol{x})=J_r^{-1}(\boldsymbol{x})\boldsymbol{e}_i,
    \end{equation}
    and
    \begin{equation}
        E^l_i E^l_j(\boldsymbol{x})=[J_l^{-T}(\boldsymbol{x})]_{ik}\frac{\partial J_l^{-1}(\boldsymbol{x})}{\partial x_k}\boldsymbol{e}_j,
    \end{equation}
    \begin{equation} E^r_i E^r_j(\boldsymbol{x})=[J_r^{-T}(\boldsymbol{x})]_{ik}\frac{\partial J_r^{-1}(\boldsymbol{x})}{\partial x_k}\boldsymbol{e}_j.
    \end{equation}
    \begin{proof}
        Since $\boldsymbol{x}=\log^{\vee}(k)$ and we parametrize the group using exponential coordinates, the ${f}(g(\boldsymbol{x}))$ in (\ref{eq:rightDerivativeJT}) is identified as ${f}(g(\boldsymbol{x}))=\boldsymbol{x}$. 
        Using (\ref{eq:rightDerivativeJT}), we have
        $$
        E^r_i(\boldsymbol{x})=[J_r]^{-T}_{ij}\frac{\partial}{\partial x_j} \boldsymbol{x}=J_r^{-1}\boldsymbol{e}_i.
        $$
        Other equations can be obtained through direct calculations.
    \end{proof}
\end{lemma}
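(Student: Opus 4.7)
The plan is to obtain all four identities by direct application of the parametric formulas (\ref{eq:rightDerivativeJT}) and (\ref{eq:leftDerivativeJT}) for the Lie derivatives, specialized to exponential coordinates. The key observation is that in exponential coordinates the parametrization variable $\boldsymbol{q}$ coincides with $\boldsymbol{x}=\log^{\vee}k$, so the function under consideration is just the identity map $\boldsymbol{x}\mapsto\boldsymbol{x}$, whose componentwise partial derivatives collapse to $\partial x_m/\partial x_j = \delta_{mj}$.

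For the first-order identities, I would treat $\boldsymbol{x}$ as a vector-valued function on $G$ and apply (\ref{eq:rightDerivativeJT}) to each component $x_m$ separately. The Kronecker delta collapses the inner sum and leaves $E_i^r(x_m) = [J_r^{-T}]_{im} = [J_r^{-1}]_{mi}$, which is the $m$-th entry of $J_r^{-1}\boldsymbol{e}_i$. Stacking over $m$ gives $E_i^r(\boldsymbol{x}) = J_r^{-1}\boldsymbol{e}_i$. The completely analogous calculation using (\ref{eq:leftDerivativeJT}) carries the extra minus sign through and yields $E_i^l(\boldsymbol{x}) = -J_l^{-1}\boldsymbol{e}_i$.

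For the second-order identities, I would regard the vector-valued quantity $E_j^l(\boldsymbol{x})(k) = -J_l^{-1}(\boldsymbol{x}(k))\boldsymbol{e}_j$ as a new function on $G$ and apply $E_i^l$ to each of its components. Feeding this into (\ref{eq:leftDerivativeJT}) a second time, the partial derivative $\partial/\partial x_k$ no longer collapses; by the chain rule it falls on the matrix $J_l^{-1}(\boldsymbol{x})$ and produces $\partial J_l^{-1}/\partial x_k$. The two minus signs cancel, leaving $E_i^l E_j^l(\boldsymbol{x}) = [J_l^{-T}]_{ik}(\partial J_l^{-1}/\partial x_k)\boldsymbol{e}_j$. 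The right-derivative analogue follows identically from (\ref{eq:rightDerivativeJT}) with no sign to track, giving $E_i^r E_j^r(\boldsymbol{x}) = [J_r^{-T}]_{ik}(\partial J_r^{-1}/\partial x_k)\boldsymbol{e}_j$.

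The main obstacle is really only index bookkeeping: remembering that the transpose in $[J^{-T}]_{ij}$ selects the $(j,i)$ entry of $J^{-1}$ so that the summation realigns into matrix-vector form, keeping the single minus sign of the left derivative consistent across both applications, and recognizing that the Lie derivative applied to a vector-valued function is defined componentwise. No deeper structure is needed once the exponential coordinate choice trivializes the Jacobian of the identity map $\boldsymbol{x}\mapsto\boldsymbol{x}$.
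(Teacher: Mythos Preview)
Your proposal is correct and follows exactly the paper's approach: apply the parametric Lie-derivative formulas (\ref{eq:rightDerivativeJT}) and (\ref{eq:leftDerivativeJT}) with $f(g(\boldsymbol{x}))=\boldsymbol{x}$ so that $\partial x_m/\partial x_j=\delta_{mj}$ collapses the sum, then iterate once more for the second-order identities. The paper only writes out the $E_i^r(\boldsymbol{x})$ case and defers the rest to ``direct calculations,'' which is precisely the bookkeeping you spell out.
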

The Lie derivative of $\boldsymbol{x}\boldsymbol{x}^T$ can be calculated by the Leibniz rule:
\begin{equation*}
    \begin{aligned}
&E_i^l(\boldsymbol{x}\boldsymbol{x}^T)=E_i^l(\boldsymbol{x})\boldsymbol{x}^T+\boldsymbol{x}E_i^l(\boldsymbol{x}^T),\\
    E_i^l &E_j^l(\boldsymbol{x}\boldsymbol{x}^T)=\text{sym}\left [E_i^l E_j^l(\boldsymbol{x})\boldsymbol{x}^T+E_i^l(\boldsymbol{x}) E_j^l(\boldsymbol{x}^T)\right ],
    \end{aligned}
\end{equation*}
where $\text{sym}(A)\doteq A+A^T$.
Substituting them back into (\ref{eq:exact_EOM}) and organizing terms, we have:
\begin{corollary} \label{coro:mean_cov_prop}
The propagation of mean and covariance matrix in (\ref{eq:exact_EOM}) can be organized into a parametrized form:
\begin{equation} \label{eq:mean_cov_prop_exact}
\left \{
\begin{aligned}
    (\dot{\mu}\mu^{-1})^{\vee}\!&\!=\!\langle J_r^{-1}\rangle^{-1}\left\langle \frac{1}{2} \frac{\partial J_l^{-1}}{\partial x_k}(HH^T J_l^{-T})\boldsymbol{e}_k +J_l^{-1}{\boldsymbol{h}^c}\right\rangle\\
 \dot{\Sigma}\qquad \!&\!=\!\bigg \langle\text{sym}\bigg[ \big(\frac{1}{2}\frac{\partial J_l^{-1}}{\partial x_k}(HH^T J_l^{-T})\boldsymbol{e}_k \!-\! J_r^{-1} (\dot{\mu}\mu^{-1})^{\vee}\\
&\qquad  +J_l^{-1}{\boldsymbol{h}^c} \big)\boldsymbol{x}^T  \bigg]+J_l^{-1}HH^TJ_l^{-T}  \bigg \rangle
\end{aligned}
\right .
\end{equation}
% where $\text{sym}(A)\doteq A+A^T$.
\begin{proof}
The derivation is in Appendix \ref{proof:coro}
.\end{proof}
\end{corollary}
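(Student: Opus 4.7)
The plan is to substitute the parametric identities from Lemma~\ref{lemma:lie_J} directly into the exact moment equations of Theorem~\ref{thm:prop_all_moments} and then collapse the resulting Einstein-indexed expressions into matrix form. Since Corollary~\ref{coro:mean_cov_prop} is essentially a bookkeeping restatement of the first two lines of~(\ref{eq:exact_EOM}), no new analytic content is required; the work is purely algebraic.

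For the mean equation, the first observation is that $E^r_i(\boldsymbol{x})\boldsymbol{e}_i^T = J_r^{-1}\boldsymbol{e}_i\boldsymbol{e}_i^T = J_r^{-1}$ under the Einstein sum, so the prefactor becomes $\langle J_r^{-1}\rangle^{-1}$. Substituting $E^l_i(\boldsymbol{x}) = -J_l^{-1}\boldsymbol{e}_i$ into the drift contribution gives $-h^c_i E^l_i(\boldsymbol{x}) = h^c_i J_l^{-1}\boldsymbol{e}_i = J_l^{-1}\boldsymbol{h}^c$. For the diffusion contribution, inserting $E^l_jE^l_i(\boldsymbol{x}) = [J_l^{-T}]_{jk}(\partial J_l^{-1}/\partial x_k)\boldsymbol{e}_i$ together with $H_{ip}H^T_{pj} = (HH^T)_{ij}$, the sum over $j$ gives $(HH^T)_{ij}[J_l^{-T}]_{jk} = (HH^T J_l^{-T})_{ik}$, and the subsequent sum over $i$ collapses $(HH^T J_l^{-T})_{ik}\boldsymbol{e}_i$ to $(HH^T J_l^{-T})\boldsymbol{e}_k$, leaving $\tfrac{1}{2}(\partial J_l^{-1}/\partial x_k)(HH^T J_l^{-T})\boldsymbol{e}_k$. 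Adding the two contributions reproduces the stated mean RHS.

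For the covariance, I will apply the Leibniz rule identities displayed just above the statement to expand $E^l_jE^l_i(\boldsymbol{x}\boldsymbol{x}^T)$ and $E^r_i(\boldsymbol{x}\boldsymbol{x}^T)$. The drift term $-(\dot{\mu}\mu^{-1})^{\vee}_i E^r_i(\boldsymbol{x}\boldsymbol{x}^T)$ evaluates to $\text{sym}[-J_r^{-1}(\dot{\mu}\mu^{-1})^{\vee}\boldsymbol{x}^T]$ and the $h^c$ term to $\text{sym}[J_l^{-1}\boldsymbol{h}^c\boldsymbol{x}^T]$, in each case by contracting the free index and invoking the transpose relation. The diffusion term splits into two pieces: the $E^l_jE^l_i(\boldsymbol{x})\boldsymbol{x}^T$ piece, which by the same index collapse used for the mean contributes $\text{sym}[\tfrac{1}{2}(\partial J_l^{-1}/\partial x_k)(HH^T J_l^{-T})\boldsymbol{e}_k\boldsymbol{x}^T]$; and the cross piece $(HH^T)_{ij}E^l_j(\boldsymbol{x})E^l_i(\boldsymbol{x}^T) = (HH^T)_{ij}J_l^{-1}\boldsymbol{e}_j\boldsymbol{e}_i^T J_l^{-T} = J_l^{-1}HH^T J_l^{-T}$, where the last equality uses the symmetry of $HH^T$. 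Because this last matrix is itself symmetric, applying $\text{sym}$ to it would double it, so the factor of $\tfrac{1}{2}$ from the Fokker--Planck equation leaves exactly one copy of $J_l^{-1}HH^T J_l^{-T}$ outside the $\text{sym}$ brackets, matching the target expression.

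The main obstacle, though not deep, is disciplined index bookkeeping: matching the outer/inner ordering of $E^l_jE^l_i$ in~(\ref{eq:exact_EOM}) with the index layout of Lemma~\ref{lemma:lie_J}, and keeping track of which contractions collapse to $(HH^T J_l^{-T})\boldsymbol{e}_k$ versus ordinary matrix products. The one subtlety worth flagging explicitly in the appendix is the symmetry argument for the $J_l^{-1}HH^T J_l^{-T}$ piece, since a naive application of $\text{sym}$ to an already symmetric matrix would introduce a spurious factor of two and break the match with the stated formula.
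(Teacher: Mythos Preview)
Your proposal is correct and follows essentially the same route as the paper's own proof: substitute the Jacobian identities of Lemma~\ref{lemma:lie_J} into the first two lines of~(\ref{eq:exact_EOM}), then collapse the Einstein sums into matrix products. The paper's appendix proof does exactly this, though it does not spell out the symmetry argument for the $J_l^{-1}HH^T J_l^{-T}$ term as explicitly as you do.
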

In this context, the averaging $\langle \cdot \rangle$ is
\begin{equation} \label{eq:ave_rho}
    \langle \varphi(k) \rangle = \int_{D\subseteq R^{N}} \varphi(k(\boldsymbol{x}))\rho(k(\boldsymbol{x}),t)|\det J_l(\boldsymbol{x})|d\boldsymbol{x},
\end{equation}
where $D$ is the domain of the exponential map, \textit{e}. \textit{g}. $\lVert\boldsymbol{x}\rVert_2<\pi$ for $SO(3)$.
% It is useful to define $\tilde{\rho}(\boldsymbol{x})\doteq \rho(k(\boldsymbol{x}))|\det J_l(\boldsymbol{x})|$, which will be used in what follows.
In most cases, the integration does not have a closed-form expression.
We can approximate it by two methods: numerical quadrature and moment expansion.
For the first approach, a Gaussian distribution assumption is often necessary.

\subsubsection{Gaussian solution to Fokker-Planck equation}
Assuming we have obtained the mean and covariance during $0\leq t\leq T$, we can construct an approximate solution to (\ref{eq:leftFPE_EOM}) by defining a Gaussian distribution in exponential coordinates
\begin{equation} \label{eq:concentrated_gaussian}
\begin{aligned}
    \tilde{\rho}(\boldsymbol{x},t)\doteq \frac{1}{Z}\cdot
 \exp \left( -\frac{1}{2} 
 \boldsymbol{x}^T\Sigma^{-1}(t)\boldsymbol{x}\right)
 \end{aligned}
\end{equation}
where $\boldsymbol{x}=\log^{\vee}(g\circ \mu^{-1}(t))$,
and convert it to a distribution on the group by using
\begin{equation}
\begin{aligned}
f(g\circ \mu^{-1}(t),t)dg
% &=f\left(\exp (\boldsymbol{x})\circ \mu(t),t\right)dg\\
&\doteq f(\exp(\boldsymbol{x})\circ \mu^{-1}(t),t)|\det J_l(\boldsymbol{x})|d\boldsymbol{x}\\
&=\rho\left(\exp (\boldsymbol{x}),t\right)|\det J_l(\boldsymbol{x})|d\boldsymbol{x}\\
&= \tilde{\rho}(\boldsymbol{x},t)d\boldsymbol{x},
\end{aligned}
\end{equation}
where $\rho(k,t)=f(k\circ \mu(t),t)$ is defined in the re-centering step.
For small covariance, the normalizing constant 
$Z\approx (2\pi)^{N/2}|\det \Sigma(t)|^{1/2}$ and this Gaussian distribution is almost the same as the concentrated Gaussian in \cite{bourmaud2015continuous,long2013banana,sjoberg2021lie,barfoot2014associating}:
\begin{equation}
    g=\exp{(\boldsymbol{x})}\circ \mu(t),
\end{equation}
where $\boldsymbol{x}\sim \mathcal{N}(\boldsymbol{0},\Sigma(t))$.
It is not hard to see that the mean of the Gaussian distribution is $\mu(t)$.
For a short time, when the probability density is concentrated on the exponential coordinate, the right group-theoretic covariance of the Gaussian (\ref{eq:concentrated_gaussian}) is $\Sigma(t)$:
\begin{equation*}
\begin{aligned}
    &\int_G \left[\log^\vee\left(g \circ \mu^{-1}(t)\right)\right] \left[\log^\vee\left(g \circ \mu^{-1}(t)\right)\right]^T f(g,t)dg\\
    % \doteq&\int_{D\subseteq \mathbb{R}^N}\boldsymbol{x}\boldsymbol{x}^T {f}(g(\boldsymbol{x}),t)|J_l(\boldsymbol{x})|d\boldsymbol{x}\\
    =& \int_{D\subseteq \mathbb{R}^N}\boldsymbol{x}\boldsymbol{x}^T \tilde{\rho}(\boldsymbol{x},t)d\boldsymbol{x}\\
     % \approx&\int_{ \mathbb{R}^N}\boldsymbol{x}\boldsymbol{x}^T \tilde{\rho}(\boldsymbol{x},t)d\boldsymbol{x} \\
    \approx & \,\,\Sigma(t),
\end{aligned}
\end{equation*}
where $D$ is the domain of the exponential coordinate.

By using this Gaussian assumption, the integration in (\ref{eq:ave_rho}) becomes
\begin{equation}
    \begin{aligned}
            \langle \varphi(k) \rangle &\approx
            \int_{\mathbb{R}^N}\varphi(k(\boldsymbol{x}))\tilde{\rho}(\boldsymbol{x},t)d\boldsymbol{x},
    \end{aligned}
\end{equation}
where $\tilde{\rho}(\boldsymbol{x},t)$ defined in (\ref{eq:concentrated_gaussian}) is a zero-mean Gaussian.
We proceed to use it to derive an approximate propagation formula for (\ref{eq:mean_cov_prop_exact}) by numerical quadrature.

\subsubsection{Approximate propagation by quadrature} \label{sec:app_sol_qua}
The unscented transform is widely used for calculating the expectation of a function when the probability distribution is approximated by a Gaussian.
If the ground true distribution is indeed a Gaussian, the unscented transform is exact to the second-order.
% That is, when the integrand is a second-order polynomial, the value calculated by the unscented transform is the same as the analytic integration.
In this case, higher-order quadrature methods can also be used, like the fourth-order conjugate unscented transform \cite{adurthi2018conjugate} and Gauss-Hermite quadrature.
The ``$k$th-order'' here means when the integrand is a $k$th-order polynomial, the numerical quadrature gives an exact value.

For all quadrature methods, the key idea is that a set of ``sigma points'' can be computed from the mean and covariance, and the integration is approximated by a linear sum of the function evaluated at these sigma points:
\begin{equation}
\begin{aligned}
    (\boldsymbol{\xi}_i,\, w_i)&= \text{Algorithm}(\mu, \Sigma),\quad i=1,2,...,M,\\
    &\int_{\mathbb{R}^N}\varphi(\boldsymbol{x})d\boldsymbol{x}\approx \sum_{i=1}^M w_i \varphi(\boldsymbol{\xi}_i).
\end{aligned}
\end{equation}
Different algorithms for computing the sigma points $\boldsymbol{\xi}_i$ and weight $w_i$ lead to different approximation precision and speed.

In our case, we have three integrations in the propagation formula (\ref{eq:mean_cov_prop_exact}).
To prevent repeated computation, the integration in the mean propagation should be computed first.
The following intermediate terms can be stored and used for computing covariance propagation:
\begin{equation}
\begin{aligned}
    &J_r^{-1}(\boldsymbol{\xi}_i),\quad  \frac{\partial J_l^{-1}(\boldsymbol{\xi}_i)}{\partial x_k}(HH^T J_l^{-T}(\boldsymbol{\xi}_i))\boldsymbol{e}_k, \\
     &J_l^{-1}(\boldsymbol{\xi}_i)H, \quad J_l^{-1}(\boldsymbol{\xi}_i){\boldsymbol{h}^c}(\boldsymbol{\xi}_i).
\end{aligned}
\end{equation}
We only need to calculate the linear sum of these stored quantities multiplying sigma points $\boldsymbol{\xi}_i$ or $(\dot{\mu}\mu^{-1})^{\vee}$ for propagating covariance.

\subsubsection{Approximate propagation by expansion of moments}
Another approximation method for (\ref{eq:mean_cov_prop_exact}) is expansion of moments, which expands the quantities inside ``$\langle \cdot \rangle$'' into polynomials and truncates:
\begin{equation} \label{eq:Eom_example}
    \begin{aligned}
    \langle M(\boldsymbol{x})\rangle
    &= \langle A_0+ A_1^i x_i+ A_2^{ij} x_i x_j+...  \rangle \\
    &\approx A_0 +   A_2^{ij} \langle x_i x_j\rangle, 
    \end{aligned}
\end{equation}
where the first-order term vanishes because of (\ref{eq:rho_mean}). 
We do not need to assume the form of the distribution for this method.
The approximation error of truncation is discussed in a moment closure literature \cite{kuehn2016moment} as an ``extremely difficult'' problem.
So we will not delve into analyzing the error but simply call (\ref{eq:Eom_example}) a second-order approximation, since it is exact when the integrand is a second-order polynomial. 
% For brevity, we do not distinguish $\langle \mathcal{O}(||\boldsymbol{x}||^n) \rangle$ and $\mathcal{O}(||\boldsymbol{x}||^n)$.
% By doing so, we do not need a Gaussian distribution assumption.
% Note that when the ground truth distribution is not a Gaussian, quadrature methods developed using a Gaussian assumption, for example, the unscented transform, may result in errors when the integrand is a second-order polynomial, while the expansion method remains exact.
The following lemma is used for expansion.
\begin{lemma} \label{jacobian_expansion} \cite{bullo1995proportional}
    When using the exponential coordinate to parameterize the group, the Taylor expansion of $J_{l}^{-1}(\boldsymbol{x})$ and $J_{r}^{-1}(\boldsymbol{x})$ is:
    \begin{equation}
        J_l^{-1}(\boldsymbol{x})= -\frac{1}{2}[ad_X]+\sum_{k=0}^{\infty}\frac{B_{2k}}{(2k)!} [ad_{X}]^{2k},
    \end{equation}
    \begin{equation}
        J_r^{-1}(\boldsymbol{x})= \frac{1}{2}[ad_X]+\sum_{k=0}^{\infty}\frac{B_{2k}}{(2k)!} [ad_{X}]^{2k},
    \end{equation}
    where $B_{n}$ is the $n$th Bernoulli number.
\end{lemma}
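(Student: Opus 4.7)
The plan is to reduce the statement to the well-known generating-function identity
\[
\frac{z}{e^{z}-1} \;=\; \sum_{n=0}^{\infty} \frac{B_{n}}{n!}\,z^{n},
\]
applied with $z = [ad_{X}]$, and then strip off the lone odd-degree Bernoulli number $B_{1}=-\tfrac{1}{2}$ (since $B_{2k+1}=0$ for $k\ge 1$) to match the form in the lemma.

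First I would recall the standard closed-form series for the left and right Jacobians in exponential coordinates $g(\boldsymbol{x})=\exp(X)$ with $X=\boldsymbol{x}^{\wedge}$, namely
\[
J_{l}(\boldsymbol{x}) \;=\; \sum_{k=0}^{\infty}\frac{1}{(k+1)!}[ad_{X}]^{k} \;=\; \frac{e^{[ad_{X}]}-\mathbb{I}}{[ad_{X}]},
\qquad
J_{r}(\boldsymbol{x}) \;=\; \sum_{k=0}^{\infty}\frac{(-1)^{k}}{(k+1)!}[ad_{X}]^{k} \;=\; \frac{\mathbb{I}-e^{-[ad_{X}]}}{[ad_{X}]}.
\]
These are derived by differentiating $g^{-1}\partial g/\partial x_{j}$ and $(\partial g/\partial x_{j})g^{-1}$ using the identity $\frac{d}{dt}e^{X(t)}e^{-X(t)} = \sum_{k\ge 0}\frac{1}{(k+1)!}[ad_{X}]^{k}\dot X$ and reading off the columns per definitions (\ref{eq:JR})--(\ref{eq:JL}); the computation is standard (see \cite{chirikjian2016harmonic}) and can be cited rather than redone.

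Next I would invert the closed-form expressions formally. For $J_{l}^{-1}$, setting $z=[ad_{X}]$ inside the generating-function identity gives directly
\[
J_{l}^{-1}(\boldsymbol{x}) \;=\; \frac{[ad_{X}]}{e^{[ad_{X}]}-\mathbb{I}} \;=\; \sum_{n=0}^{\infty}\frac{B_{n}}{n!}\,[ad_{X}]^{n}.
\]
Using $B_{0}=1$, $B_{1}=-\tfrac{1}{2}$, and $B_{2k+1}=0$ for all $k\ge 1$, I can separate the $n=1$ term from the even-indexed terms and obtain the advertised formula for $J_{l}^{-1}$. For $J_{r}^{-1}$, I would rewrite
\[
J_{r}^{-1}(\boldsymbol{x}) \;=\; \frac{[ad_{X}]}{\mathbb{I}-e^{-[ad_{X}]}} \;=\; \frac{-[ad_{X}]}{e^{-[ad_{X}]}-\mathbb{I}},
\]
apply the same identity with $z=-[ad_{X}]$ to get $\sum_{n}\frac{B_{n}}{n!}(-1)^{n}[ad_{X}]^{n}$, and again isolate the $n=1$ term (which flips sign to $+\tfrac{1}{2}[ad_{X}]$) from the even terms (which are unchanged).

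The only mildly subtle point is that the equalities above are a priori identities of formal power series in the scalar $z$, and I need to justify plugging in the matrix $[ad_{X}]$. This is routine: the radius of convergence of $z/(e^{z}-1)$ around $0$ is $2\pi$, so absolute convergence of the matrix series holds whenever the spectral radius of $[ad_{X}]$ is less than $2\pi$, which covers a neighborhood of the identity and is standard for statements of this Taylor-series form; I would state this convergence caveat briefly and cite \cite{bullo1995proportional}. No real obstacle remains beyond that bookkeeping.
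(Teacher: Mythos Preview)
Your argument is correct: the standard series $J_{l}=(e^{[ad_X]}-\mathbb{I})/[ad_X]$ and $J_{r}=(\mathbb{I}-e^{-[ad_X]})/[ad_X]$ in exponential coordinates, combined with the Bernoulli generating function $z/(e^{z}-1)=\sum_{n}B_{n}z^{n}/n!$ and the fact that $B_{1}=-\tfrac12$ is the only nonzero odd Bernoulli number, yield exactly the two displayed expansions; the convergence caveat you mention (spectral radius of $[ad_X]$ below $2\pi$) is the right one.

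Note, however, that the paper does not supply its own proof of this lemma at all: it simply attributes the result to \cite{bullo1995proportional} and states it without argument. So there is no ``paper's approach'' to compare against; your write-up is a self-contained derivation that the paper omits. If anything, you are doing more than the paper requires---a one-line citation would have sufficed to match the paper, but your proof plan is sound and would make the manuscript more self-contained.
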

From the lemma, we see that the Taylor series have no odd-order term except for the first-order term.
The first four terms in the Taylor series are
\begin{equation}
    J_l^{-1}(\boldsymbol{x})=\mathbb{I}-\frac{1}{2}[ad_X]+\frac{1}{12}[ad_X]^2-\frac{1}{720}[ad_X]^4+\mathcal{O}(||\boldsymbol{x}||^6),
\end{equation}
\begin{equation}
    J_r^{-1}(\boldsymbol{x})=\mathbb{I}+\frac{1}{2}[ad_X]+\frac{1}{12}[ad_X]^2-\frac{1}{720}[ad_X]^4+\mathcal{O}(||\boldsymbol{x}||^6).
\end{equation}
Now we can explain the invertibility of $\langle J_r^{-1} \rangle$ in (\ref{eq:mean_cov_prop_exact}), which is $\langle E^r_i(\boldsymbol{x})\boldsymbol{e}_i^T \rangle$ in (\ref{eq:exact_EOM}).
Since the distribution starts from a Dirac delta function, this matrix is the identity matrix at $t=0$, using the expansion formula.
Assume the smoothness of the solution with respect to time, it will be invertible for short times.

Applying the Taylor expansion to (\ref{eq:mean_cov_prop_exact}), we have a second-order approximate propagation formula for the mean and covariance matrix:
\begin{theorem} \label{thm:second_order_app}
When using the Taylor expansion and neglecting moments higher than the second order, the equation (\ref{eq:mean_cov_prop_exact}) can be approximated as
\begin{equation} \label{eq:second_order_prop}
\left\{
    \begin{aligned}
        (\dot{\mu} \mu^{-1})^{\vee}\!&\approx \boldsymbol{h}(\mu,t)+  (M^{\mu})_{ij}\Sigma_{ij }\\
        \dot{\Sigma}\quad&\approx HH^T+(M^{\Sigma})_{ij}\Sigma_{ij}
    \end{aligned},
    \right .
\end{equation}
where the coefficients are
\begin{equation} \label{eq:second_order_coef}
\left \{
    \begin{aligned}
        % (M^{\mu}_1)_{ij}&=\frac{1}{12}ad_i ad_j,\\
         % M_2^{\mu} \quad&=-\frac{1}{4}ad_{i}HH^T \boldsymbol{e}_i,\\  
        (M^{\mu})_{ij}&=-\frac{1}{48}(ad_k HH^T ad_j^T ad_i^T +ad_i ad_k HH^T ad_j^T )\boldsymbol{e}_k\\
        &\quad +\frac{1}{2}\frac{\partial^2  {\boldsymbol{h}^c}}{\partial x_i \partial x_j}-\frac{1}{2}ad_i \frac{\partial {\boldsymbol{h}^c}}{\partial x_j},\\
        (M^{\Sigma})_{ij}&=\text{sym}\bigg\{\bigg[ \frac{1}{8}ad_k HH^T ad_i^T \boldsymbol{e}_k+\frac{1}{24}ad_k ad_i HH^T \boldsymbol{e}_k\\
        &\quad -\!\frac{1}{2}ad_i (\boldsymbol{h}+(\dot{\mu}\mu^{-1})^{\vee}) 
        +\frac{\partial {\boldsymbol{h}^c}}{\partial x_i} \!\bigg ]
        \boldsymbol{e}_j^T \\
        &\quad+\frac{1}{12}ad_i ad_j HH^T \bigg\}  +\frac{1}{4}ad_i HH^T ad_j^T
    \end{aligned}
    \right.
\end{equation}
where $ad_i\doteq [ad_{E_i}]$ and $\frac{\partial {\boldsymbol{h}}^c}{\partial x_i}=\frac{\partial \boldsymbol{h}(\exp(\boldsymbol{x}^{\wedge})\circ \mu,t)}{\partial x_i}\big |_{\boldsymbol{x}=\boldsymbol{0}}$.
\begin{proof}
The derivation is in the Appendix \ref{proof:2nd_prop}.
\end{proof}
\end{theorem}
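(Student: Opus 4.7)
The plan is to substitute the Taylor expansions of the Jacobians (Lemma~\ref{jacobian_expansion}) and of $\boldsymbol{h}^c(\boldsymbol{x},t)$ around $\boldsymbol{x}=\boldsymbol{0}$ into the exact formula~(\ref{eq:mean_cov_prop_exact}), expand the products, and then take expectations under $\rho$, using $\langle 1\rangle=1$, $\langle x_i\rangle=0$, and $\langle x_i x_j\rangle=\Sigma_{ij}$ while discarding all moments of order three or higher. Since the mean equation multiplies $\langle J_r^{-1}\rangle^{-1}$ against a bracket that already contains a zeroth-order piece, each factor must be expanded up to $O(\|\boldsymbol{x}\|^2)$ before averaging. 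From Lemma~\ref{jacobian_expansion} I also need the derivative $\partial J_l^{-1}/\partial x_k = -\tfrac{1}{2}ad_k + \tfrac{1}{12}x_i(ad_k ad_i + ad_i ad_k) + O(\|\boldsymbol{x}\|^3)$, obtained by differentiating the expansion termwise.

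For the mean, I would compute $\langle \tfrac{1}{2}\tfrac{\partial J_l^{-1}}{\partial x_k}(HH^T J_l^{-T})\boldsymbol{e}_k + J_l^{-1}\boldsymbol{h}^c\rangle$ to order $\Sigma$ and likewise invert $\langle J_r^{-1}\rangle = \mathbb{I} + \tfrac{1}{12}\Sigma_{ij}ad_i ad_j + O(\Sigma^2)$ via a Neumann series to get $\langle J_r^{-1}\rangle^{-1} = \mathbb{I} - \tfrac{1}{12}\Sigma_{ij}ad_i ad_j + O(\Sigma^2)$. Multiplying and truncating at $O(\Sigma)$ yields a leading $\boldsymbol{h}(\mu,t)$; crucially, the two contributions $\pm\tfrac{1}{12}\Sigma_{ij}ad_i ad_j\boldsymbol{h}(\mu,t)$ arising from the inverse and from the $\tfrac{1}{12}[ad_X]^2$ piece of $J_l^{-1}$ cancel exactly, leaving only derivative-of-$\boldsymbol{h}^c$ and $HH^T$-quadratic terms. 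The covariance equation is treated analogously: $J_l^{-1}HH^T J_l^{-T}$ evaluated at $\boldsymbol{x}=\boldsymbol{0}$ produces the leading $HH^T$, and the $\text{sym}[(\cdots)\boldsymbol{x}^T]$ piece contributes at $O(\Sigma)$ because $\boldsymbol{x}^T$ already supplies one factor of $\boldsymbol{x}$, so the parenthesis need only be expanded to first order.

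The main obstacle is isolating cancellations that are not immediately visible. The bracket in the mean equation carries an apparent $O(1)$ contribution $-\tfrac{1}{4}\sum_k ad_k HH^T \boldsymbol{e}_k$ that would violate the stated formula. The key identity is that for any symmetric matrix $A$, $\sum_k ad_k A\boldsymbol{e}_k = \boldsymbol{0}$: its $m$-th component equals $\sum_{k,l}c_{kl}^m A_{lk}$, where $c_{kl}^m$ are the structure constants of the Lie algebra and satisfy $c_{kl}^m = -c_{lk}^m$, while $A_{lk} = A_{kl}$, so the contraction of an antisymmetric tensor against a symmetric one vanishes. Applying this with $A = HH^T$ kills the spurious zeroth-order term, and the same symmetrization argument (after contracting with $\Sigma_{ij}$) eliminates one of the three $HH^T$-quadratic terms that naively appear, reducing them to the two summands displayed in $(M^\mu)_{ij}$. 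After these cancellations and a minor relabeling via $\Sigma_{ij} = \Sigma_{ji}$, the remaining terms organize exactly into~(\ref{eq:second_order_prop})--(\ref{eq:second_order_coef}).
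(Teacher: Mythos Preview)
Your proposal is correct and follows essentially the same approach as the paper's proof: both expand the Jacobians via Lemma~\ref{jacobian_expansion}, invert $\langle J_r^{-1}\rangle$ by a first-order Neumann series, and rely on the identity $\sum_k ad_k S\,\boldsymbol{e}_k=\boldsymbol{0}$ for symmetric $S$ (the paper proves it from $[E_i,E_j]=-[E_j,E_i]$, you from the antisymmetry of the structure constants, which is the same thing) to remove the spurious $-\tfrac14 ad_k HH^T\boldsymbol{e}_k$ term and the extra second-order $HH^T$ term. The paper applies the identity before averaging (with $S=ad_X HH^T ad_X^T$) while you contract with $\Sigma_{ij}$ first, but the arguments are equivalent.
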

Note that $(M^{\mu})_{ij}$ is a vector and $ (M^{\Sigma})_{ij}$ is a matrix, not the entry of a matrix.
Since most of the terms in coefficient matrices are constant or only depend on time, they can be pre-computed and stored.
% When compared with \cite{bourmaud2015continuous}, the second-order terms in the covariance propagation formula are not exactly the same, where we have additional terms.

When truncating the RHS of (\ref{eq:exact_EOM}) at a higher order moment, we should keep the propagation equations to at least the same order to close the equations.
When the parametric form of the distribution is known, higher order moments may be calculated from lower order moments, like the Gaussian distribution, where all moments can be calculated from the mean and covariance \cite{kuehn2016moment}.
In this case, we can use fewer ODEs to achieve a higher-order approximation.
% It is also possible to use numerical quadrature, like the conjugate unscented transform, to approximate the exact formula.

% Note that when we have another Fokker-Planck equation:
% \begin{equation}\label{eq:rightFPE_EOM}
% \left \{
%     \begin{aligned}
%     &\frac{\partial f}{\partial t} \,=E^r_i\left(h_i f\right) + \frac{1}{2}E^r_iE^r_j(H_{ik}H^T_{kj}f)\\
%     &f(g,0)=\delta\left(\log^{\vee}(g)\right)
%     \end{aligned}
%     \right.
% \end{equation}
% which differs from Equation (\ref{eq:leftFPE_EOM}) by $E^r_i$, we can still convert it into the form of Equation (\ref{eq:leftFPE_EOM}) and use the propagation equation above to give an approximate solution.
% The conversion is explained in Appendix I.

\subsection{Relationship between solutions to left and right Fokker-Planck equations} \label{sec:FPE_connection}
Suppose we have a left Fokker-Planck equation on a matrix Lie group $G$
\begin{equation} \label{eq:l_FPE}
    \frac{\partial f}{\partial t} =E^l_i\left(h_i f\right) + \frac{1}{2}E^l_iE^l_j(H_{ik}H^T_{kj}f),
\end{equation}
and another right Fokker-Planck equation
\begin{equation} \label{eq:r_FPE}
    \frac{\partial f^*}{\partial t} =-E^r_i\left(h^*_i f^*\right) + \frac{1}{2}E^r_iE^r_j(H^*_{ik}{H^*_{kj}}^Tf^*),
\end{equation}
where $h_i^*(g,t)=-h_i(g^{-1},t)$ and $H_{ij}^*(g,t)=H_{ij}(g^{-1},t)$.
The following theorem states the relationship between the solutions of these two equations:
\begin{theorem} \label{thm:l_f_connection}
    If $f(g,t)$ is a solution to (\ref{eq:l_FPE}), then $f^*(g,t)\doteq f(g^{-1},t)$ is a solution to (\ref{eq:r_FPE}).
    \begin{proof}
    The proof is in Appendix \ref{proof:lf_connection}.
    \end{proof}
\end{theorem}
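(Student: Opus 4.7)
The plan is to transform the left Fokker-Planck equation into the right one by evaluating both sides at $g^{-1}$ and relating the two Lie derivatives under the inversion map.

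The central observation is the following intertwining identity: for any differentiable $\phi:G\to\mathbb{R}$, if $\phi^*(g)\doteq \phi(g^{-1})$ then
\begin{equation*}
E^r_i(\phi^*)(g) \;=\; (E^l_i\phi)(g^{-1}) \;=\; (E^l_i\phi)^*(g).
\end{equation*}
This follows directly from the definitions in (\ref{eq:rightDerivativeNP}) and (\ref{eq:leftDerivativeNP}): by the chain rule applied to the inversion map,
\begin{equation*}
E^r_i(\phi^*)(g) \;=\; \left.\tfrac{d}{dt}\right|_{0}\!\phi\!\left((g\exp(tE_i))^{-1}\right) \;=\; \left.\tfrac{d}{dt}\right|_{0}\!\phi(\exp(-tE_i)\,g^{-1}),
\end{equation*}
which is precisely $(E^l_i\phi)(g^{-1})$. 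Iterating the identity yields $E^r_iE^r_j(\phi^*) = (E^l_iE^l_j\phi)^*$, so applying the $*$-operation commutes with swapping left Lie derivatives for right Lie derivatives.

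Next I would evaluate both sides of (\ref{eq:l_FPE}) at $g^{-1}$. The time derivative becomes $\partial_t f^*(g,t)$ directly. For the drift term, note that $(h_i f)^*(g) = h_i(g^{-1},t)\,f(g^{-1},t) = -h_i^*(g,t)\,f^*(g,t)$ by the definition $h_i^* = -h_i(g^{-1},\cdot)$, and so the intertwining identity gives
\begin{equation*}
[E^l_i(h_i f)](g^{-1}) \;=\; E^r_i\bigl((h_i f)^*\bigr)(g) \;=\; -E^r_i\bigl(h_i^*(g,t)\,f^*(g,t)\bigr).
\end{equation*}
For the diffusion term, since $H_{ij}^*(g,t) = H_{ij}(g^{-1},t)$ (no sign change), we have $(H_{ik}H^T_{kj}f)^*(g) = H^*_{ik}(g,t){H^*_{kj}}^T(g,t)\,f^*(g,t)$, and the iterated intertwining identity yields
\begin{equation*}
[E^l_iE^l_j(H_{ik}H^T_{kj}f)](g^{-1}) \;=\; E^r_iE^r_j\bigl(H^*_{ik}{H^*_{kj}}^T f^*\bigr)(g).
\end{equation*}
Assembling the three pieces reproduces (\ref{eq:r_FPE}) exactly.

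The computation is short; the only subtlety to flag is the asymmetry in the definitions, namely that the drift picks up a minus sign under inversion while the diffusion coefficient does not. This asymmetry is exactly what makes the transformed equation into a \emph{right} FPE with $-E^r_i(h_i^*f^*)$ as its drift divergence, rather than producing an equation of the same sign structure as the left one. The only ingredient beyond the definitions is the intertwining identity, which I would present as a short lemma at the start of the proof so that the main calculation reduces to bookkeeping on the sign of $h_i^*$.
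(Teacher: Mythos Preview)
Your proposal is correct and follows essentially the same route as the paper: both establish the intertwining identity $(E^l_i\phi)(g^{-1})=E^r_i(\phi^*)(g)$ from the definitions, iterate it for the second-order term, and then substitute into (\ref{eq:l_FPE}) while tracking the sign introduced by $h_i^*=-h_i(g^{-1},\cdot)$. Your presentation is slightly more explicit about isolating the intertwining identity as a lemma and about why the drift and diffusion behave differently under inversion, but the underlying argument is the same as the paper's.
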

% The result can be directly extended to equations (\ref{eq:rightFPE_correct_statement}) and (\ref{eq:leftFPE_correct}) without any modification.
The mean and covariance of these two distributions are also related by the following theorem:
\begin{theorem} \label{theorem:left_right_statistics}
    Denote the right group-theoretic mean of $f(g,t)$ as $\mu_r(t)$ and $\Sigma_r(t)$, the left group-theoretic mean of $f^*(g,t)\doteq f(g^{-1},t)$ as $\mu_l^*(t)$ and $\Sigma_l^*(t)$.
    These two means and covariances are related by
    \begin{equation}
        \mu_l^*(t)=[\mu_r(t)]^{-1},
    \end{equation}
    \begin{equation}
        \Sigma^*_l(t)=\Sigma_r(t).
    \end{equation}
\end{theorem}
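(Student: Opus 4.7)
The plan is to prove both identities by a single change of variables $h = g^{-1}$ in the defining integrals, using the unimodularity of $G$ so that $d(g^{-1}) = dg$. The key algebraic identities are $(h \circ \mu)^{-1} = \mu^{-1} \circ h^{-1}$ in the group and $\log^{\vee}(A^{-1}) = -\log^{\vee}(A)$ in the Lie algebra (valid since exp is surjective and we are working on the relevant branch).

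First I would write down the definition of the left mean of $f^{*}$, namely
\begin{equation*}
\int_{G} \log^{\vee}\!\bigl((\mu_{l}^{*})^{-1} \circ g\bigr)\, f^{*}(g)\, dg \;=\; \boldsymbol{0},
\end{equation*}
substitute $f^{*}(g) = f(g^{-1})$, and perform the change of variable $h = g^{-1}$. Unimodularity gives $dg = dh$, while the argument of the logarithm becomes $(\mu_{l}^{*})^{-1} \circ h^{-1} = (h \circ \mu_{l}^{*})^{-1}$. Pulling out the minus sign via $\log^{\vee}(A^{-1}) = -\log^{\vee}(A)$ leaves
\begin{equation*}
\int_{G} \log^{\vee}\!\bigl(h \circ \mu_{l}^{*}\bigr)\, f(h)\, dh \;=\; \boldsymbol{0}.
\end{equation*}
Comparing with the definition (\ref{eq:GroupMu}) of $\mu_{r}$, uniqueness of the group mean forces $\mu_{l}^{*} = \mu_{r}^{-1}$, which is the first claim.

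For the covariance, I would run the identical substitution in the defining integral
\begin{equation*}
\Sigma_{l}^{*} \;=\; \int_{G} \bigl[\log^{\vee}\!\bigl((\mu_{l}^{*})^{-1} \circ g\bigr)\bigr]\bigl[\log^{\vee}\!\bigl((\mu_{l}^{*})^{-1} \circ g\bigr)\bigr]^{T} f^{*}(g)\, dg.
\end{equation*}
The two minus signs coming from $\log^{\vee}(A^{-1}) = -\log^{\vee}(A)$ cancel in the outer product, and after replacing $\mu_{l}^{*}$ by $\mu_{r}^{-1}$ the integrand becomes exactly $[\log^{\vee}(h \circ \mu_{r}^{-1})][\log^{\vee}(h \circ \mu_{r}^{-1})]^{T} f(h)$, yielding $\Sigma_{l}^{*} = \Sigma_{r}$ by (\ref{eq:GroupCov}).

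I do not expect a serious obstacle; the only care points are (i) invoking unimodularity to justify $d(g^{-1}) = dg$ (both the left and right Haar measures coincide, and inversion preserves them on a unimodular group), and (ii) ensuring that $\log^{\vee}(g^{-1}) = -\log^{\vee}(g)$ is applied on the domain where the exponential/logarithm is the chosen branch, which is fine here because the groups under consideration have a surjective exponential map and the integrand representative is well-defined almost everywhere. Uniqueness of the group-theoretic mean (which is implicit in (\ref{eq:GroupMu}) for the class of concentrated distributions considered) is what then lets us conclude equality of the two mean elements rather than merely equality of the integral vanishing conditions.
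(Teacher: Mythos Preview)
Your proposal is correct and follows essentially the same route as the paper's proof: substitute $f^{*}(g)=f(g^{-1})$, change variables $g\mapsto g^{-1}$ using the bi-invariance of the Haar measure on a unimodular group, and apply $\log^{\vee}(A^{-1})=-\log^{\vee}(A)$ to match the defining integral of the other-handed mean (and likewise for the covariance, where the two minus signs cancel). Your write-up is in fact cleaner than the paper's sketch, which carries out the identical manipulation but with slightly muddled notation for which mean is which.
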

\begin{proof}
The proof is in Appendix \ref{proof:mean_relation}.
\end{proof}
Then, if we have a right FPE, we can convert it into a left FPE where we have derived a propagation formula for the right mean and covariance and use the following relationship:
\begin{equation}({\mu_l^*}^{-1}\dot{\mu}_l^*)^{\vee}=-(\dot{\mu}_r \mu_r^{-1})^{\vee},
\end{equation}
\begin{equation}
    \dot{\Sigma}_l^*=\dot{\Sigma}_r,
\end{equation}
to derive the propagation of the left mean and covariance.

\subsection{Relathionship to Kalman filters} \label{sec:EKF}
The propagation steps of Kalman filters on Euclidean space can be derived from the previous propagation formula as well.
To do that, we need to represent an element of $\mathbb{R}^N$ as a matrix Lie group.
The trick for the direct product group in Sec. \ref{sec:back_A} applies here
\begin{equation}
    g(\boldsymbol{x})\doteq \text{diag}(e^{\boldsymbol{x}})=
  \begin{pmatrix}
    e^{x_1} & & \\
    & \ddots & \\
    & & e^{x_N}
  \end{pmatrix}.
\end{equation}
It is obvious that $g(\boldsymbol{x}_1)g(\boldsymbol{x}_2)=g(\boldsymbol{x}_1+\boldsymbol{x}_2)$.
Simple calculations show that both the left and the right Jacobians of this group are an identity matrix.
The basis of Lie algebra and Lie derivatives are simple as well
\begin{equation}
    \begin{aligned}
    &E_i=\text{diag}(\boldsymbol{e}_i)\,\,\,\,\, \text{and} \, 
    &E_i^r(f)=-E_i^l(f)=\frac{\partial f}{\partial x_i}.
    \end{aligned}
\end{equation}
Then, we can formulate the SDE on $\mathbb{R}^N$
\begin{equation} \label{eq:SDE_RN}
d\boldsymbol{x}=\boldsymbol{h}(\boldsymbol{x},t)dt+H(t) \;\circledS\; dW,
\end{equation}
as a SDE on $\mathbb{R}^{N\times N}$
\begin{equation}
    \begin{aligned}
d\boldsymbol{x}&=({dg}g^{-1})^{\vee}
    = \boldsymbol{h}(g(\boldsymbol{x}),t)dt+H(t) \;\circledS\; dW.\\
% &=\begin{pmatrix}
% J_r^{-1}(\boldsymbol{x}_{1:3})I^{-1}\boldsymbol{x}_{4:6}\\
%  -(I^{-1} \boldsymbol{x}_{4:6})\times \boldsymbol{x}_{4:6}-CI^{-1}\boldsymbol{x}_{4:6}+\boldsymbol{N}
%     \end{pmatrix}dt
% +\begin{pmatrix}
%         \boldsymbol{O} \\
%         B
%     \end{pmatrix}dW
    \end{aligned}
\end{equation}

Denote the mean of $\boldsymbol{x}$ as $\bar{\boldsymbol{x}}$, the covariance matrix as $\Sigma$, and define $\boldsymbol{\mathrm{x}}=\boldsymbol{x}-\bar{\boldsymbol{x}}$.
Using Corollary \ref{coro:mean_cov_prop} and the Jacobians, we have the propagation equations (\ref{eq:mean_cov_prop_exact}) to be
\begin{equation} \label{eq:ekf_exact}
    \begin{cases}
        \dot{\bar{\boldsymbol{x}}}=\mathbb{I}^{-1}\langle \boldsymbol{0} +\boldsymbol{h}(\bar{\boldsymbol{x}}+\boldsymbol{\mathrm{x}},t)\rangle=\langle \boldsymbol{h}(\bar{\boldsymbol{x}}+\boldsymbol{\mathrm{x}})\rangle\\
        \dot{\Sigma}=HH^T+ \langle \boldsymbol{h}(\bar{\boldsymbol{x}}+\boldsymbol{\mathrm{x}},t)\boldsymbol{\mathrm{x}}^T +\boldsymbol{\mathrm{x}} {\boldsymbol{h}^T(\bar{\boldsymbol{x}}+\boldsymbol{\mathrm{x}},t)}\rangle
    \end{cases},
\end{equation}
where $\bar{\boldsymbol{x}}$ is a deterministic variable and $\boldsymbol{\mathrm{x}}\sim \mathcal{N}(\boldsymbol{0},\Sigma)$.
When using the unscented transform to approximate the right-hand side averaging, the resulting formula is equivalent to the propagation step of a continuous-time unscented Kalman filter (UKF) in \cite{sarkka2007unscented} by the fact that:
\begin{equation}
    \langle \, \boldsymbol{h}(\bar{\boldsymbol{x}}+\boldsymbol{\mathrm{x}},t)\boldsymbol{\mathrm{x}}^T\rangle = \langle \,[\boldsymbol{h}(\bar{\boldsymbol{x}}+\boldsymbol{\mathrm{x}},t)-\boldsymbol{h}(\bar{\boldsymbol{x}},t)]\boldsymbol{\mathrm{x}}^T\rangle.
\end{equation}
When expanding $\boldsymbol{h}(\bar{\boldsymbol{x}}+\boldsymbol{\mathrm{x}},t)$ to the first order, we have
\begin{equation}
\left\{
\begin{aligned}
        \dot{\bar{\boldsymbol{x}}}&=\langle \boldsymbol{h}(\bar{\boldsymbol{x}},t) + O(||\boldsymbol{\mathrm{x}}||^2) \rangle \\ 
        &\approx \boldsymbol{h}(\bar{\boldsymbol{x}},t)\\
        \dot{\Sigma}&=\langle HH^T\!+ \frac{\partial \boldsymbol{h}(\bar{\boldsymbol{x}},t)}{\partial \boldsymbol{\mathrm{x}}^T}\boldsymbol{\mathrm{x}}\boldsymbol{\mathrm{x}}^T \!+\boldsymbol{\mathrm{x}} (\frac{\partial \boldsymbol{h}(\bar{\boldsymbol{x}},t)}{\partial \boldsymbol{\mathrm{x}}^T}\boldsymbol{\mathrm{x}})^T \!+ O(||\boldsymbol{\mathrm{x}}||^3) \rangle\\
        &\approx HH^T+\frac{\partial \boldsymbol{h}(\bar{\boldsymbol{x}},t)}{\partial \boldsymbol{\mathrm{x}}^T}\Sigma + \Sigma (\frac{\partial \boldsymbol{h}(\bar{\boldsymbol{x}},t)}{\partial \boldsymbol{\mathrm{x}}^T})^T
\end{aligned}
\right.
\end{equation}
which is the propagation step of the extended Kalman filter (EKF). 
% Note that the first-order term vanishes and is not written explicitly.
In fact, since the propagation equation for covariance is given, the propagation of the mean can also include second-order terms.
But in traditional EKF, they are not taken into account.

\section{Propagating Uncertainty for Orientational Dynamics of Rigid Bodies}\label{sec:GoverningEquations}

This section applies the previous theory to propagate the uncertainty of the attitude and angular momentum of a rotating rigid body described in phase space.
Notations and the stochastic differential equation (SDE) in phase space are introduced in Section \ref{sec:dynamics_eq_review}.
We construct a group structure to the phase space by imbuing a direct product group structure in Section \ref{sec:DPG_review}.
The SDE in the phase space can be considered as an SDE on a group.
The corresponding Fokker-Planck equations and propagation formulas are derived.
In Section \ref{sec:UKF_LA}, we briefly review the Lie algebraic unscented Kalman filter \cite{sjoberg2021lie} which we implement as a baseline method.
Note that attitude and angular velocity estimation problem has a long history dating back to the last century. 
There are many existing works \cite{fujikawa1995spacecraft,axelrad1996spacecraft,psiaki2009generalized,humphreys2005magnetometer,sanyal2008global,ma2014magnetometer,srivastava2021attitude} for this problem.
It is also used as a benchmark problem to test a general filtering algorithm in \cite{phogat2020invariant}.

\subsection{Rotational dynamics in phase space}\label{sec:dynamics_eq_review}
Suppose we have a three-dimensional rigid body, whose inertia matrix is $I$ in a body-fixed frame.
Its altitude is represented by a $3\times 3$ rotation matrix $R\in SO(3)$. 
The angular velocity in the body-fixed frame is calculated as $\bm{\omega} = (R^T\dot{R})^\vee$, where the $\vee$ operation is for the Lie algebra of $SO(3)$.
Then angular momentum is $\bm{\ell} = I\bm{\omega}$.
% The space of angular momenta is $(\mathbb{R}^3,+)$, where $\mathbb{R}^3$ is imbued with the vector addition operation and seen as a group.
% We can imbue a direct/semi-direct product structure to the phase space to make it a matrix Lie group. 
% Then the joint distribution $f(R,\boldsymbol{\ell},t)$ can be seen as a distribution on group $f(g,t)$. 
We use Hamiltonian mechanics to describe the motion of the rigid body.
The phase space is the space of angular momenta $(\mathbb{R}^3,+)$ and the space of rotation $SO(3)$.

Assume the rigid body is rotating in a viscous fluid subject to a deterministic torque $\boldsymbol{N}^*(t)$, a linear viscous torque $-C\boldsymbol{\omega}$, and a random disturbance torque $\boldsymbol{\eta}$ modeled by Brownian noise.
The classical Euler's equation describes the evolution of angular momentum:
\begin{equation*}
    \dot{\boldsymbol{\ell}} + (I^{-1}\boldsymbol{\ell})\times\boldsymbol{\ell} = -CI^{-1}\boldsymbol{\ell} + \boldsymbol{N}^* + \bm{\eta}.
\end{equation*}
By expressing the noise term explicitly, $\bm{\eta} dt = B\, \circledS \,d\bm{W}$, where $d\bm{W}$ is a Wiener process increment for a short time $dt$, we can write the Euler's equation as a stochastic differential equation (SDE)
\begin{equation}\label{eq:AngMomentSDE}
    d\boldsymbol{\ell} + (I^{-1}\boldsymbol{\ell})\times\boldsymbol{\ell}\;dt = (-CI^{-1}\boldsymbol{\ell}\; + \boldsymbol{N}^*)\,dt+B\;\circledS\;d\bm{W}.
\end{equation}
We assume the noise coefficient matrix $B$ only depends on time. 
The evolution of altitude is determined by
\begin{equation}\label{eq:RotSDE}
    (R^T\dot{R})^\vee dt = I^{-1}\bm{\ell}\;dt.
\end{equation}
% The stochastic differential equations (\ref{eq:AngMomentSDE}) and (\ref{eq:RotSDE}) lead to a Fokker-Planck equation in phase space---the space of angular momentum and orientation---that can be solved to obtain the joint distribution $f(R,\boldsymbol{\ell},t)$. 
Since $(\mathbb{R}^3,+)$ is an Abelian group, we can imbue a direct product group structure to the phase space and reformulate the dynamics equation to be on the group.

\subsection{Error propagation on $SO(3)\times\mathbb{R}^3$} \label{sec:DPG_review}
We first discuss the direct product group.
The matrix representation of an element of $SO(3)\times\mathbb{R}^3$ is
\begin{equation}
    g(R,\boldsymbol{\ell})=\begin{pmatrix}
        R & \mathbb{O} \\
        \mathbb{O} & \text{diag}(e^{\boldsymbol{\ell}}),
    \end{pmatrix}
\end{equation}
where $\text{diag}(e^{\boldsymbol{\ell}})$ is a diagonal matrix, whose $i$th element is $e^{\ell_i}$.
The basis of the Lie algebra of $SO(3)\times\mathbb{R}^3$ is:
$$
E_i = \begin{cases}
\begin{pmatrix}
    {E}_i^{\scriptscriptstyle {SO(3)}} & \mathbb{O}\\
    \mathbb{O} & \mathbb{O}
\end{pmatrix}, \,\, i=1,2,3 \\
\begin{pmatrix}
    \mathbb{O} & \mathbb{O}\\
    \mathbb{O} & \text{diag}(\boldsymbol{e}_{i-3})
\end{pmatrix}, \, \, i=4,5,6
\end{cases},
$$
where ${E}_i^{\scriptscriptstyle {SO(3)}}$ is the basis of $SO(3)$ and $\text{diag}(\boldsymbol{e}_i)$ is a $3\times 3$ diagonal matrix whose $i$th diagonal element is $1$ and other elements are $0$.
If $X$ is in the Lie algebra of $SO(3)\times \mathbb{R}^3$
\begin{equation}
    X=\begin{pmatrix}
        \Omega & \mathbb{O}\\
        \mathbb{O} & \text{diag}(\boldsymbol{\ell})
    \end{pmatrix},
\end{equation}
then
\begin{equation} \label{eq:ad_dir}
    [ad(X)]=\begin{pmatrix}
        \Omega & \mathbb{O}\\
        \mathbb{O} & \mathbb{O}
    \end{pmatrix}.
\end{equation}
The right and left Jacobians, $\mathcal{J}_l$ and $\mathcal{J}_r$, are
\begin{equation} \label{eq:jacobian_dir}
    \mathcal{J}_r = 
    \begin{pmatrix}
    J_r^{\scriptscriptstyle{SO(3)}}(R) && \mathbb{O}\\
    \mathbb{O} && \mathbb{I}
    \end{pmatrix}
    \;\;\text{and}\;\;
    \mathcal{J}_l = 
    \begin{pmatrix}
    J_l^{\scriptscriptstyle{SO(3)}}(R) && \mathbb{O}\\
    \mathbb{O} && \mathbb{I}
    \end{pmatrix},
\end{equation}
where $J_r^{\scriptscriptstyle{SO(3)}}$ and $J_l^{\scriptscriptstyle{SO(3)}}$ are the Jacobians for $SO(3)$.
It is easy to see that this group is also unimodular.
Using the following relationship
\begin{equation}
    (g^{-1}\dot{g})^{\vee}=\begin{pmatrix}
        (R^T \dot{R})^{\vee} \\
        \dot{\boldsymbol{\ell}}
        \end{pmatrix}
        =\begin{pmatrix}
            \boldsymbol{\omega} \\
            \dot{\boldsymbol{\ell}}
        \end{pmatrix},
\end{equation}
we have the dynamics equation
\begin{equation} \label{eq:dirSDE}
\begin{aligned}
    (g^{-1}\dot{g})^{\vee}dt=\begin{pmatrix}
        I^{-1}\boldsymbol{\ell} \\
        \boldsymbol{\ell}\times (I^{-1}\boldsymbol{\ell})-CI^{-1}\boldsymbol{\ell}+\boldsymbol{N}^*
    \end{pmatrix}dt +\begin{pmatrix}
    \mathbb{O} && \mathbb{O}\\
    \mathbb{O} && B \end{pmatrix}\circledS\;d\boldsymbol{W}.
    \end{aligned}
\end{equation}
Calculations lead to the following result:
\begin{prop}
The right Fokker-Planck equation corresponding to the SDE (\ref{eq:dirSDE}) is
\begin{equation}\label{eq:FPE_Dir_Group}
    \frac{\partial f}{\partial t} =  -E^r_i(h_i^* f) + \frac{1}{2} (B'B'^T)_{ij}E^r_iE^r_j f,
\end{equation} where
\begin{equation} \label{eq:h_dir}
    \boldsymbol{h}^*(g(R,\boldsymbol{\ell}),t) = \begin{pmatrix}
        I^{-1}\boldsymbol{\ell} \\
        \boldsymbol{\ell}\times (I^{-1}\boldsymbol{\ell})-CI^{-1}\boldsymbol{\ell}+\boldsymbol{N}^*
    \end{pmatrix},
\end{equation}
and
\begin{equation} \label{def:B_prime}
B' = 
\begin{pmatrix}
\mathbb{O} && \mathbb{O}\\
\mathbb{O} && B
\end{pmatrix}.
\end{equation}
\end{prop}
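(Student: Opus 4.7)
The plan is to recognize the proposition as a direct instantiation of Corollary \ref{crlly:Corollary1} on the direct product group $SO(3)\times\mathbb{R}^3$. Concretely, I would first observe that the SDE (\ref{eq:dirSDE}) is already written in the right Stratonovich form $(g^{-1}dg)^\vee = \boldsymbol{h}\,dt + H \circledS d\boldsymbol{W}$ used in (\ref{eq:GrpStraton}), with drift $\boldsymbol{h} = \boldsymbol{h}^*$ given by (\ref{eq:h_dir}) and noise coefficient $H = B'$ given by (\ref{def:B_prime}). Since the hypotheses of Corollary \ref{crlly:Corollary1} are (i) that the underlying group be unimodular and (ii) that $H$ depend only on time, I would verify these two items before invoking the corollary.

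For (i), unimodularity of $SO(3)\times\mathbb{R}^3$ follows from the Jacobians displayed in (\ref{eq:jacobian_dir}): $\mathcal{J}_l$ and $\mathcal{J}_r$ are block diagonal with the corresponding $SO(3)$ Jacobian in the upper block and the identity in the lower block, so $|\det \mathcal{J}_l| = |\det J_l^{SO(3)}| = |\det J_r^{SO(3)}| = |\det \mathcal{J}_r|$, using the known unimodularity of $SO(3)$. For (ii), the lower block of $B'$ is $B(t)$, which is assumed in Section \ref{sec:dynamics_eq_review} to depend only on time, and the other blocks are zero; hence $B' = B'(t)$.

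With these two hypotheses in place, Corollary \ref{crlly:Corollary1} directly yields
\begin{equation*}
\frac{\partial f}{\partial t} + E^r_m(h_m^* f) = \frac{1}{2} E^r_m E^r_l\bigl((B')_{mn} (B'^T)_{nl}\, f\bigr).
\end{equation*}
Because $B'B'^T$ depends only on $t$, its entries commute with the right Lie derivatives $E^r_i$, so $(B')_{mn}(B'^T)_{nl}$ can be pulled outside both Lie derivatives; renaming summation indices $m\to i$, $l\to j$ and moving the drift term to the right-hand side gives precisely (\ref{eq:FPE_Dir_Group}).

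The proof therefore contains no real obstacle: the only nontrivial content is the unimodularity check and the bookkeeping that $B'$ retains the ``depends only on time'' property from $B$. If anything is delicate, it is making explicit that the block structure in (\ref{eq:jacobian_dir}) and (\ref{def:B_prime}) is what allows the hypotheses of the corollary to transfer cleanly from the abstract setting of Section \ref{sec:back_B} to this concrete phase-space group; but no new computation beyond substitution into Corollary \ref{crlly:Corollary1} is required.
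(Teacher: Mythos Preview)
Your proposal is correct and matches the paper's approach: the paper simply states ``Calculations lead to the following result'' before the proposition, leaving the derivation implicit, and your argument---checking unimodularity via (\ref{eq:jacobian_dir}), noting $B'=B'(t)$, and invoking Corollary \ref{crlly:Corollary1}---is exactly the calculation being alluded to. Nothing further is needed.
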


Since the propagation formula (\ref{eq:mean_cov_prop_exact}) is developed for left FPEs, we utilize Theorem \ref{thm:l_f_connection} to convert (\ref{eq:FPE_Dir_Group}) into a left FPE by defining
\begin{equation}
    \boldsymbol{h}(g,t) \doteq -\boldsymbol{h}^*(g^{-1},t)= \begin{pmatrix}
        I^{-1}\boldsymbol{\ell} \\
        -\boldsymbol{\ell}\times (I^{-1}\boldsymbol{\ell})-CI^{-1}\boldsymbol{\ell}-\boldsymbol{N}^*
    \end{pmatrix}.
\end{equation}
Because of the special structure of the Jacobian of the direct product group (\ref{eq:jacobian_dir}) and the noise matrix (\ref{def:B_prime}), the propagation equation (\ref{eq:mean_cov_prop_exact}) for the left FPE can be simplified to be
\begin{equation} \label{eq:exact_dir_l}
\left\{
\begin{aligned}
    (\dot{\mu}_r{\mu}^{-1}_r)^{\vee}\!&\!=\langle J_r^{-1}\rangle^{-1}\langle J_l^{-1}{\boldsymbol{h}^c_r}\rangle\\
\quad \dot{\Sigma}_r \qquad\!&\!=\!\bigg \langle\text{sym}\big[ (- J_r^{-1} (\dot{\mu}_r{\mu}^{-1}_r)^{\vee} +J_l^{-1}{\boldsymbol{h}^c_r} \big)\boldsymbol{x}^T  \big]+B'B'^T  \bigg \rangle
\end{aligned}
\right.
\end{equation}
where $\boldsymbol{h}^c_r(k,t)\doteq\boldsymbol{h}(k\circ \mu_r,t)$.
Using Theorem \ref{theorem:left_right_statistics}, we have the reformulated propagation equation for the right FPE (\ref{eq:FPE_Dir_Group}):
\begin{equation} \label{eq:exact_dir_r}
\left\{
\begin{aligned}
    ({\mu^*_l}^{-1}\dot{\mu}_l^*)^{\vee}\!&\!=-\langle J_r^{-1}\rangle^{-1}\langle J_l^{-1}{\boldsymbol{h}^c_l}\rangle\\
\quad \dot{\Sigma}_l^* \qquad\!&\!=\!\bigg \langle\text{sym}\big[ ( J_r^{-1} ({\mu^*_l}^{-1}\dot{\mu}_l^*)^{\vee} +J_l^{-1}{\boldsymbol{h}^c_l} \big)\boldsymbol{x}^T  \big]+B'B'^T  \bigg \rangle
\end{aligned}
\right.
\end{equation}
where $\boldsymbol{h}^c_l(k,t)\doteq\boldsymbol{h}(k\circ {\mu_l^*}^{-1},t)$.
% Since it is empirically observed that the marginal distribution on the exponential coordinate of $SO(3)$ and the distribution of the angular momentum fit the Gaussian distribution well, we assume the probability distribution is a Gaussian on the exponential coordinate of $SO(3)\times \mathbb{R}^3$:
% \begin{equation} \label{eq:EMD_distribution}
%     \tilde{u}_{EMD}(\boldsymbol{x},t)=\frac{1}{(2\pi)^3\sqrt{|\det \Sigma|}}\cdot \exp \left( -\frac{1}{2\pi}\boldsymbol{x}^T\Sigma^{-1}\boldsymbol{x}\right ),
% \end{equation}
% where $\boldsymbol{x}=\log^{\vee}(g\circ \mu^{-1}(t))$.
% Note that the `$\log$' and `$\vee$' here are of $SO(3)\times \mathbb{R}^3$ and the ``EMD'' stands for ``Expansion of Moment on Direct Product Group''.
% The Gaussian assumption implies the marginal distributions are all Gaussians and that the third moments vanish.
% Then expanding (\ref{eq:exact_dir}) to the fourth order will not introduce too many terms.
% We proceed to develop a fourth-order propagation formula where the fourth moments are calculated from the second moments (covariance) using the property of a zero-mean Gaussian distribution \cite{gardiner1985handbook},
% \begin{equation} \label{eq:gaussian_4moment}
%     \langle x_i x_j x_k x_l\rangle=\langle x_i x_j \rangle \langle x_k x_l \rangle+\langle x_i x_k\rangle \langle x_j x_l \rangle + \langle x_i x_l \rangle \langle x_j x_k \rangle.
% \end{equation}

Although writing the elements of the direct product group as a matrix yields compact equations (\ref{eq:exact_dir_r}), many terms in the equation turn out to be zeros, which wastes memory and computational resources.
So we present the propagation formula for rotation and angular momentum separately.
We denote the rotation and angular momentum part of the mean $\mu_l^*$ as $\bar{R}$ and $\bar{\boldsymbol{\ell}}$.
The logarithm $\boldsymbol{x}=\log^{\vee}k$ is separated into $\boldsymbol{x}_R=\boldsymbol{x}_{1:3}$ and $\boldsymbol{x}_{{\ell}}=\boldsymbol{x}_{4:6}$.
The covariance matrix is decomposed as $\Sigma_{RR}={(\Sigma_l^*)}_{1:3,1:3}$, $\Sigma_{\ell \ell}={(\Sigma_l^*)}_{4:6,4:6}$, and $\Sigma_{R{\ell}}={(\Sigma_l^*)}_{1:3,4:6}$.
In what follows, all the Lie group operations and Jacobians are for $SO(3)$.
The expansions of Jacobians of $SO(3)$ in (\ref{eq:left_Jacobian_SO3}, \ref{eq:right_Jacobian_SO3}) are
% \begin{equation} \label{eq:SO3_rJacobian_approx}
% J_r^{-1}(\boldsymbol{x}_R) 
% =I+\frac{1}{2}X_R+\frac{1}{12}X^2_R+\frac{1}{720}||\boldsymbol{x}_R||^2X^2+ \mathcal{O}(||\boldsymbol{x}_R||^6),
% \end{equation}
% \begin{equation} \label{eq:SO3_lJacobian_approx}
% J_l^{-1}(\boldsymbol{x}_R) 
% =I-\frac{1}{2}X_R+\frac{1}{12}X^2_R+\frac{1}{720}||\boldsymbol{x}_R||^2X^2+ \mathcal{O}(||\boldsymbol{x}_R||^6),
% \end{equation}
\begin{equation} \label{eq:SO3_rJacobian_approx}
J_r^{-1}(\boldsymbol{x}_R) 
=I+\frac{1}{2}X_R+\frac{1}{12}X^2_R+\mathcal{O}(\lVert\boldsymbol{x}_R\rVert^4),
\end{equation}
\begin{equation} \label{eq:SO3_lJacobian_approx}
J_l^{-1}(\boldsymbol{x}_R) 
=I-\frac{1}{2}X_R+\frac{1}{12}X^2_R+\mathcal{O}(\lVert\boldsymbol{x}_R\rVert^4),
\end{equation}
which can be derived by expanding (\ref{eq:left_Jacobian_SO3}, \ref{eq:right_Jacobian_SO3}) into Taylor series or employing Lemma \ref{jacobian_expansion}.
The explicit expression for ${\boldsymbol{h}}^c_l(k(\boldsymbol{x}),t)$ is
\begin{equation} \label{eq:tilde_h_dir}
{\boldsymbol{h}}^c_l(k(\boldsymbol{x}),t)=
    \begin{pmatrix}
        I^{-1}({\boldsymbol{x}}_{\ell}-\bar{\boldsymbol{\ell}})\\
        -(C+(\boldsymbol{x}_{\ell}-\bar{\boldsymbol{\ell}})^{\wedge})I^{-1}(\boldsymbol{x}_{\ell}-\bar{\boldsymbol{\ell}})\!-\!\boldsymbol{N}^*
    \end{pmatrix},
\end{equation}
where we use ``$\wedge$'' of $SO(3)$ to convert the cross product in (\ref{eq:h_dir}) into matrix multiplication.
% The $\tilde{\boldsymbol{h}}$ is already a polynomial, which saves time in performing Taylor expansion.

We first introduce the approximate propagation method based on a second-order numerical quadrature. 
We approximate right-hand side averaging in (\ref{eq:exact_dir_r}) by unscented transform \cite{julier1995new,sarkka2023bayesian} and call it ``UTD'', Unscented Transform for the Direct product group.
% if we assume the distribution $\tilde{f}(\boldsymbol{x},t)$ to be a concentrated Gaussian distribution (\ref{eq:concentrated_gaussian}). 
% Then the distribution $\tilde{\rho}(\boldsymbol{x},t)\doteq \rho(k(\boldsymbol{x}))|\det J_r(\boldsymbol{x})|$ will be a zero-mean Gaussian distribution with $\Sigma_r$ being the covariance.
It approximates the averaging of all functions in (\ref{eq:exact_dir_r}) by a weighted sum of $(2n+1)$ sigma points
\begin{equation}
    \langle \varphi \rangle \doteq \int_{D\subseteq R^n} \varphi(\boldsymbol{x})\tilde{\rho}(\boldsymbol{x},t)d\boldsymbol{x}\approx \sum_{i=-n}^n w_i \varphi(\boldsymbol{\xi}_i),
\end{equation}
where,
\begin{equation}  w_0=\kappa/(n+\kappa),\qquad\boldsymbol{\xi}_0=\boldsymbol{0}
\end{equation}
and
\begin{equation}
    \begin{aligned}
       &\qquad w_i=1/[2(n+\kappa)],\,\, \\ \boldsymbol{\xi}_i=\text{sgn}&(i)\left(\sqrt{(n+\kappa)\Sigma_l^*}\right)_{\text{abs}(i)},\,\, i=\pm 1, \, \pm 2, ...\,,\pm n.
    \end{aligned}
\end{equation}
The square root above stands for Cholesky decomposition and its subscript is the column index of the matrix.
We use parameters $\kappa=-3$ to satisfy the constraint $\kappa+n=3$.
The technique introduced in Section \ref{sec:app_sol_qua} can be used to speed up computation.

Then we introduce the approximate propagation based on expansion of moments.
Substituting (\ref{eq:SO3_rJacobian_approx}, \ref{eq:SO3_lJacobian_approx}, \ref{eq:tilde_h_dir}) into (\ref{eq:exact_dir_r}) and removing third and higher-order terms, we have the second-order propagation formula of mean
\begin{equation} \label{eq:joint_est_prop}
\left \{
    \begin{aligned}
        (\bar R^T\dot{\bar R})^{\vee}&\approx I^{-1}\bar{\boldsymbol{\ell}}+ \frac{1}{2}(\Sigma_{R\ell})_{ij}E_i^{\scriptscriptstyle{SO(3)}}I^{-1}\boldsymbol{e}_j \\
        \dot{{\bar{\boldsymbol{\ell}} }} \qquad &\approx (\bar{\boldsymbol{\ell}}^{\wedge}-C)I^{-1}\bar{\boldsymbol{\ell}}+\boldsymbol{N}^*+(\Sigma_{\ell\ell})_{ij}E_i^{\scriptscriptstyle{SO(3)}} I^{-1}\boldsymbol{e}_j
    \end{aligned}\right.
\end{equation}
where the terms calculated by covariance are second-order terms.
% The derivation is in the Appendix.
% In fact, in the mean propagation formula, the fourth moments cancel each other.
The propagation equation for covariance is
\begin{equation} \label{eq:4th_prop}
\left \{
\begin{aligned}
    \dot{\Sigma}_{RR} &\approx \text{sym}\left(\Sigma_{R\ell} I^{-1}-\frac{1}{2}\left(\bar R^T\dot{\bar R}+(I^{-1}\bar{\boldsymbol{\ell}})^{\wedge}\right)\Sigma_{RR}\right)\\
    \dot{\Sigma}_{R\ell}&\approx -\frac{1}{2}\big((I^{-1}\bar{\boldsymbol{\ell}})^{\wedge}+\bar R^T\dot{\bar R}\big)\Sigma_{R\ell}+I^{-1}\Sigma_{\ell\ell}+\Sigma_{R\ell}(I^{-1}\bar{\boldsymbol{\ell}})^{\wedge}\\&\qquad-\Sigma_{R\ell} I^{-1}(C+\bar{\boldsymbol{\ell}}^{\wedge})\\
    \dot{\Sigma}_{\ell \ell}&\approx BB^T+\text{sym}\left(\left[ (\bar{\boldsymbol{\ell}}^{\wedge}-C)I^{-1}-(I^{-1}\bar{\boldsymbol{\ell}})^{\wedge} \right]\Sigma_{\ell\ell}\right)
    \end{aligned}
    \right .
\end{equation}
% where $M_2^{RR}$ and $M_2^{R\ell}$ are the fourth-order terms.
We refer to this formula as ``EMD2'', Expansion of Moments on Direct product group to the 2nd-order.
If we do not include the second-order terms in calculating the mean, the simplified formula is the mean propagation used in \cite{barrau2016invariant,brossard2017unscented,bourmaud2015continuous}, which we call ``EMD0''.

\subsection{Lie algebraic unscented Kalman filter} \label{sec:UKF_LA}
We implement the propagation step of the Lie algebraic unscented Kalman filter \cite{sjoberg2021lie} on the direct product group $SO(3)\times \mathbb{R}^3$ as a baseline method.
The key idea is that a non-zero mean Gaussian distribution on the exponential coordinate can be approximated to the first order by a concentrated Gaussian distribution.
That is, if we have a random variable $Y=\exp([\bar{\boldsymbol{\xi}}+\boldsymbol{\xi}]^{\wedge})$, where $\boldsymbol{\xi}\sim \mathcal{N}(\boldsymbol{0},\Sigma)$, the random variable $Z=\exp(\bar{\boldsymbol{\xi}}^{\wedge})\circ \exp(\boldsymbol{e}^{\wedge})$, where $\boldsymbol{e}\sim \mathcal{N}\left(\boldsymbol{0},J_r(\bar{\boldsymbol{\xi}})\Sigma J_r^T(\bar{\boldsymbol{\xi}})\right)$, obeys a distribution that approximates the distribution of $Y$ well.
So it is possible to parameterize the transition function by exponential coordinate, use the traditional UKF to propagate uncertainty, construct a non-zero mean Gaussian distribution, and convert it into a concentrated Gaussian distribution.
The detailed sudo-code is in ``Algorithm 1'' in the original paper \cite{sjoberg2021lie}.
Our implementation and choice of hyperparameters are according to the pseudo-code.
We refer to it as ``UKF-LA''.

\section{Numerical Results}\label{sec:Results}

We use the rotational rigid body problem described in Section \ref{sec:dynamics_eq_review} to test our propagation formula.
The equation of motion for the rigid body is Equation (\ref{eq:AngMomentSDE}, \ref{eq:RotSDE}).
We choose a body fixed frame in which the inertia tensor of the rigid body is diagonal: $I = \text{diag}(I_1,I_2,I_3)$. 
We further assume the coefficient matrices for viscosity and noise are both some multiple of an identity matrix: $C=c\mathbb{I}$ and $B=b\mathbb{I}$.

We choose two reference trajectories of angular momentum
\begin{equation} \label{eq:ref_angular}
        \boldsymbol{\ell}^*_1(t) = 
    \begin{pmatrix}
    0\\
    t+1\\
    2t+1
    \end{pmatrix}\,\,\text{and}\,\,\,
    \boldsymbol{\ell}^*_2(t) = 
    \begin{pmatrix}
    1+0.5\sin(2\pi t)\\
    0\\
    0
    \end{pmatrix}
\end{equation}
and numerically evaluate the following equation
\begin{equation}
    \boldsymbol{N}^*= \dot{\boldsymbol{\ell}} + CI^{-1}\boldsymbol{\ell}+ (I^{-1}\boldsymbol{\ell})\times\boldsymbol{\ell} 
\end{equation}
to obtain the deterministic torque.
The time derivative of the angular momentum is approximated by central difference, except for the first and the last step, where we use forward and backward finite difference.
Note that when noise is injected the mean of the angular momentum is not the same thing as the reference trajectory.
We only use the reference to calculate the deterministic torque.

When the inertial tensor in the body-fixed frame is a multiple of the identity matrix, the cross-product term in (\ref{eq:AngMomentSDE}) vanishes.
We try to avoid this special case by choosing the inertial matrix $I$ to be
\begin{equation}
    I = 
    \begin{pmatrix}
    2.070 && 0 && 0\\
    0 && 1.532 && 0\\
    0 && 0 && 1.236
    \end{pmatrix},
\end{equation}
where the relative values are used in \cite{fujikawa1995spacecraft,jayaraman2023lietheoretic}.
To make the effect of viscosity and noise noticeable, we choose $c=1$, $b=1$, and the total time $T=1$.

\subsection{Sampling and data processing}
We simulate the SDEs (\ref{eq:AngMomentSDE}) using a modified improved Euler's scheme \cite{roberts2012modify}.
Since the scheme is proposed for SDEs in Euclidean space, we adapt it for simulating the equation (\ref{eq:RotSDE}) on $SO(3)$, where the arithmetic average does not exist.
At each time step, the adapted update rule for the rotation matrix is
\begin{equation} \label{eq:IE_group}
    R(t+\Delta t)=R(t) \exp\left( \frac{\Delta t}{2} \left(I^{-1}\boldsymbol{\ell}(t)+I^{-1}\boldsymbol{\ell}(t+\Delta t)\right)^{\wedge}\right).
\end{equation}
We totally sample $N_s=5,000,000$ trajectories using $\Delta t=1\times 10^{-3}$.
The $i$th sample at time $t$ is denoted as $\left(\boldsymbol{\ell}_i(t),R_i(t)\right)$.

We proceed to calculate the mean and covariance for samples.
For angular momentum, the sample mean is the arithmetic mean
\begin{equation}
    \bar{\boldsymbol{\ell}}=\frac{1}{N_s}\sum_{i=1}^{N_s} \boldsymbol{\ell}_i.
\end{equation}
For rotation, the right sample mean is calculated by first guessing
\begin{equation} \label{eq:sample_mean_init}
    \bar{R}_s = \exp\left(\frac{1}{N_s}\sum_{i=1}^{N_s} [\log R_i]\right).
\end{equation}
The guess is then refined by
\begin{equation}
    \bar{R}_s = \bar{R}_s \circ \exp\left(\frac{1}{N_s}\sum_{i=1}^{N_s} \log(\bar{R}_s^{-1} \circ  R_i)\right),
\end{equation}
until $||\frac{1}{N_s}\sum_{i=1}^{N_s} \log \left(\bar{R}_s^{-1} \circ  R_i\right)||_F<1\times 10^{-6}$. 
We denote the converged mean as $\bar{R}_s$.
The left sample covariance is then calculated by
\begin{equation}
    \Sigma_s = \frac{1}{N_s}\sum_{i=1}^{N_s}\boldsymbol{x}_i \boldsymbol{x}_i^T
\end{equation}
where $\boldsymbol{x}_i^T=\left[(\boldsymbol{\ell}_i-\bar{\boldsymbol{\ell}})^T,\left(\log^{\vee}(\bar{R}_s^{-1}\circ R_i)\right)^T\right]$.
% We apply the procedure for the direct product group and the cotangent bundle group for every timestep.
% As for EKF, whose mean and covariance are defined on the exponential coordinate, we should calculate the arithmetic mean on the exponential coordinate, not the group-theoretic mean.
% So we only use Equation (\ref{eq:sample_mean_init}) to compute the mean, not including the refinement step.

\subsection{Numerical integration}
We use the improved Euler method to integrate all continuous propagation formulas for high accuracy.
As for UKF-LA, which is proposed for discrete-time setting (equivalent to the forward Euler method), we modify it into an improved Euler scheme by the following procedure. 
At time step $t$, we run the forward Euler equation for two steps to obtain two increments of the mean, $[\Delta t \cdot \tilde{\boldsymbol{\xi}}(t)]$ and $[\Delta t \cdot \tilde{\boldsymbol{\xi}}(t+\Delta t)]$, and two increments of covariance matrices, $[\tilde{\Sigma}(t+\Delta t)-\Sigma(t)]$ and $[\tilde{\Sigma}(t+2\Delta t)-\tilde{\Sigma}(t+\Delta t)]$.
Then we average the increments to propagate the mean and covariance:
\begin{equation*}
\begin{aligned}
    \mu(t+\Delta t)&=\mu(t)\circ \exp\left(\Delta t \cdot \frac{\tilde{\boldsymbol{\xi}}(t)+\tilde{\boldsymbol{\xi}}(t+\Delta t)}{2}\right),\\
    \Sigma(t+\Delta t)&=\frac{1}{2}\left({\Sigma}(t)+\tilde{\Sigma}(t+2\Delta t)\right).
    \end{aligned}
\end{equation*}
% This scheme is second-order in theory and achieves lower error in practice.
For all methods that decompose the covariance matrix by Cholesky decomposition, we set the initial covariance matrix to be a small diagonal matrix $\Sigma\big |_{t=0}=(1\times 10^{-8}) \cdot \mathbb{I}$, which ensures the non-singularity at the initial propagation steps while not influencing the precision.
For other methods, we set $\Sigma\big |_{t=0}=\mathbb{O}$.

\subsection{Evaluation metric}
We evaluate the mean of rotation and angular momentum separately.
% Different methods are compared to the sample mean of their group types: EMD0, EMD2, UTD, and UKF-LA are compared to the sample mean on the direct product group; EMC is compared to the sample mean on the cotangent bundle group.
For the rotation part, we calculate the difference between the sample and propagated mean and use the Frobenius norm as the error:
\begin{equation}
e_R=\lVert\bar{R}_{s}(t)-\bar{R}_{p}(t)\rVert_F.    
\end{equation}
For angular momentum, we use 2-norm for evaluation:
\begin{equation}
    e_{\ell}=\lVert\bar{\boldsymbol{\ell}}_s(t)-\bar{\boldsymbol{\ell}}_{p}(t)\rVert_2.
\end{equation}
To evaluate the covariance, we also take the difference and calculate the Frobenius norm:
\begin{equation}
    e_{\Sigma}=\lVert\Sigma_s(t)-\Sigma_p(t)\rVert_F.
\end{equation}

\subsection{Experiment results}
In this section, we show the experiment results corresponding to the trajectory $\boldsymbol{\ell}^*_1(t)$ and $\boldsymbol{\ell}^*_2(t)$ in (\ref{eq:ref_angular}).

% \begin{figure}[thpb]
%       \centering
%       \includegraphics[width=0.48\textwidth]{}
%       \caption{The evolution of the error of angular momentum mean for Trajectory 1.}
%       \label{fig:traj1_ang_error}
% \end{figure}

% \begin{figure}[thpb]
%       \centering
%       \includegraphics[width=0.48\textwidth]{}
%       \caption{The evolution of the error of rotation mean for Trajectory 1.}
%       \label{fig:traj1_rot_error}
% \end{figure}

% \begin{figure}[thpb]
%       \centering
%       \includegraphics[width=0.48\textwidth]{}
%       \caption{The evolution of the relative negative log-likelihood for Trajectory 1.}
%       \label{fig:traj1_L}
% \end{figure}

% \begin{figure}[thpb]
%       \centering
%       \includegraphics[width=0.48\textwidth]{}
%       \caption{The evolution of $e=|k-1|$ for Trajectory 1.}
%       \label{fig:traj1_k}
% \end{figure}

\begin{figure}[thpb]
      \centering
      \includegraphics[width=0.6\textwidth]{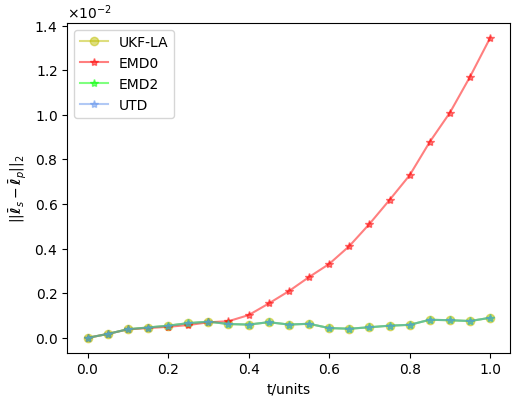}
      \caption{The plot of $e_{\ell}=\lVert\bar{\boldsymbol{\ell}}_s-\bar{\boldsymbol{\ell}}_p\rVert_2$ for Trajectory 1.}
      \label{fig:traj1_ang_error}
\end{figure}
\begin{figure}[thpb]
      \centering
      \includegraphics[width=0.6\textwidth]{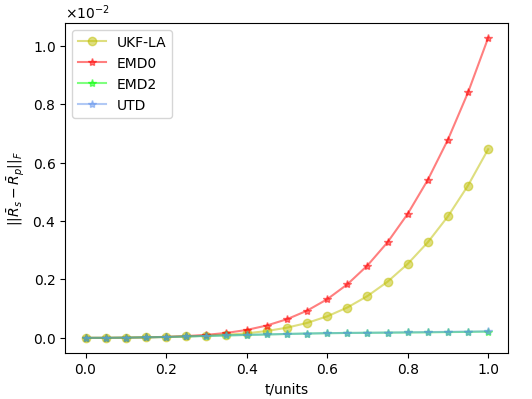}
      \caption{The plot of $e_R=\lVert\bar{R}_s-\bar{R}_p\rVert_F$ for Trajectory 1. }
      \label{fig:traj1_rot_error}
\end{figure}
\begin{figure}[thpb]
      \centering
      \includegraphics[width=0.6\textwidth]{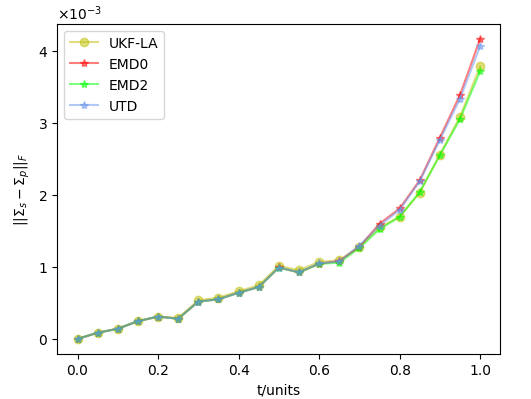}
      \caption{The plot of $e_{\Sigma}=\lVert\Sigma_s-\Sigma_p\rVert_F$ for Trajectory 1.}
      \label{fig:traj1_sigma_error}
\end{figure}

\begin{figure}[thpb]
      \centering
      \includegraphics[width=0.6\textwidth]{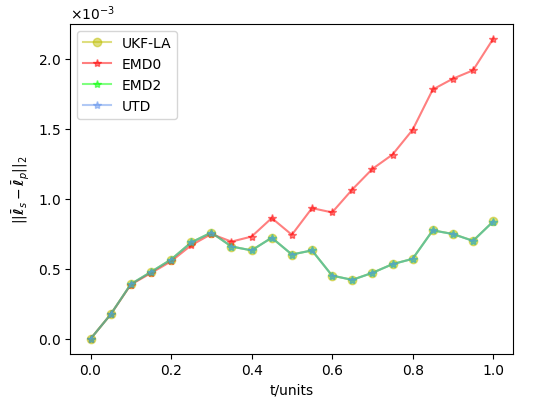}
      % \vspace{0.13cm}
      \caption{The plot of $e_{\ell}=\lVert\bar{\boldsymbol{\ell}}_s-\bar{\boldsymbol{\ell}}_p\rVert_2$ for Trajectory 2.}
      \label{fig:traj2_ang_error}
\end{figure}
\begin{figure}[thpb]
      \centering
      \includegraphics[width=0.6\textwidth]{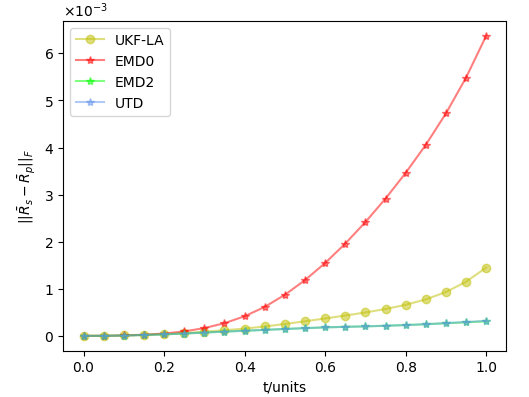}
      \caption{The plot of $e_R=\lVert\bar{R}_s-\bar{R}_p\rVert_F$ for Trajectory 2.}
      \label{fig:traj2_rot_error}
\end{figure}
\begin{figure}[thpb]
      \centering
      \includegraphics[width=0.6\textwidth]{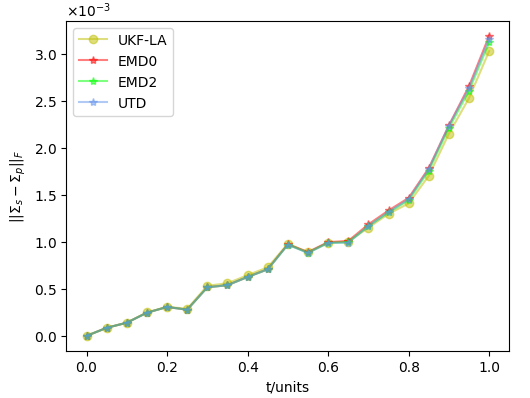}
      \caption{The plot of $e_{\Sigma}=\lVert\Sigma_s-\Sigma_p\rVert_F$ for Trajectory 2.}
      \label{fig:traj2_sigma_error}
\end{figure}

From Figure \ref{fig:traj1_ang_error} and \ref{fig:traj2_ang_error}, for angular momentum, UKF-LA, EMD2, and UTD perform equally well.
When using the direct product group, the transition function of angular momentum is in Euclidean space, not involving a complex group structure, and these three methods are all second-order.
In Figure \ref{fig:traj1_rot_error} and \ref{fig:traj2_rot_error}, for the rotation part, EMD2 and UTD are the best, with minor differences.
The UKF-LA has a relatively larger error that may come from the first-order approximation in mean propagation.
The EMD0, which uses the deterministic transition function to propagate the mean, always has the largest error.
In Figure \ref{fig:traj1_sigma_error} and \ref{fig:traj2_sigma_error}, all methods perform equally well, with UKF-LA being slightly better in Figure \ref{fig:traj2_sigma_error}.

The computation time of different methods is in Table \ref{table:time}.
The EMD2 is $12\%$ slower than EMD0 but is much more accurate in mean estimation.
The UTD and UKF-LA achieve the same accuracy as EMD2 while being 2 to 3 times slower.
All methods are implemented by Python using the NumPy library for linear algebra computation.
All group-related operators, such as the Jacobians and ``$ad$'', are computed using closed-form expressions with special treatment at singularities.

\begin{table}[!h]  \small 
\begin{center} 
% \resizebox{textwidth}{15mm}
 \begin{tabular}{||c c c c c||} 
 \hline 
 Method & UKF-LA & EMD0 & EMD2 & UTD  \\ [0.1ex] 
 \hline\hline
 $T_{traj1}/s$ & 2.17 & 0.66 & 0.74 & 1.77 \\ 
 \hline
$T_{traj2}/s$ & 2.16 & 0.66 & 0.74 & 1.79\\
 % 3 & 545 & 778 & 7507 \\
 % 4 & 545 & 18744 & 7560 \\
 % 5 & 88 & 788 & 6344 \\ [1ex] 
 \hline
 \end{tabular}
 \caption{Total running time of different methods for trajectory 1 and trajectory 2.} \label{table:time}
\end{center} 
\end{table}

%%%%%%%%%%%%%%%%%%%%%%%%%%%%%%%%%%%%%%%%%%%%%%%%%%%%%%%%%%%%%%%%%%%%%%%%%%%%%%%%
\section{Conclusion}\label{sec:Concl}

% The paper describes a small-time joint closed-form attitude and angular error propagation theory for use in dynamic state estimation problems. The propagation equations for inertial forced rotational Brownian motion are derived from the Fokker-Planck equation on the group $SO(3)^T\ltimes\mathbb{R}^3$ up to the second moment in the probability distribution function. The probability density $u(h(R,\boldsymbol{\ell}),t)$ propagated can be combined with a sensor model to construct a filter for use in dynamic state estimation. Furthermore, general theories describing the relationship between SDEs and FPEs on unimodular groups were derived parametrically; this yielded the conclusion that It\^{o} and Stratonovich formulations of SDEs in the (co-)tangent bundle are equivalent. Finally, since the propagated results are only valid for small times/covariances, it is important to extend the applicability of the theory to larger covariances, for instance by using the Fourier transform for $SE(3)$ to simplify the governing Fokker-Planck equation in (\ref{eq:FPE_OU_Group}).

This paper proposes a method to propagate uncertainty on unimodular matrix Lie groups that have a surjective exponential map.
A general formula that describes the exact propagation of mean and covariance is derived and approximated by two methods.
The group-theoretic mean propagated by the approximation methods achieves high accuracy with fast computation speed.
The connection between our propagation theory and the propagation step of Kalman filters in Euclidean space is stated.
Then we apply the theory to the attitude and angular momentum error propagation problem using Euler's equation of motion.
We derive a closed-form second-order propagation formula using the expansion of moments and a formula using the unscented transform.
The result shows that these two formulas have better accuracy in estimating the mean of rotation than both baseline methods and the closed-form formula achieves a high computation speed.
For applications that require fast computation, the closed-form formula is more suitable.
When the derivative of the drift term $\boldsymbol{h}(g,t)$ is hard to calculate manually, the quadrature method is recommended.

In the future, the error propagation technique can be extended to propagating quantities for constructing a multimodal distribution solution as what can be done by particle filters \cite{chiuso2000monte,marjanovic2016engineer,zhang2016feedback}. 
The uncertainty propagation method can also be combined with an observation model to be a continuous-discrete filter for various estimation problems, such as the quadrotor UAVs tracking problem \cite{bangura2012nonlinear,lee2010geometric}.

%%%%%%%%%%%%%%%%%%%%%%%%%%%%%%%%%%%%%%%%%%%%%%%%%%%%%%%%%%%%%%%%%%%%%%%%%%%%%%%%
% \section*{APPENDIX I}\label{sec:App}
% \input{files/appendix1}

\appendix
\section{APPENDIX}\label{sec:App1}
\subsection{Proof of Theorem \ref{thm:Thm1}} \label{proof:FPE}
The stochastic differential equation in parameters that corresponds to (\ref{eq:GrpStraton}) is
\begin{equation}\label{eq:ParaStraton}
    d\boldsymbol{q} = [J_r(\boldsymbol{q})]^{-1}\tilde{\boldsymbol{h}}(\boldsymbol{q},t)dt + [J_r]^{-1}(\boldsymbol{q})\tilde{H}(\boldsymbol{q},t)\;\circledS\; d\boldsymbol{W}.
\end{equation}
To convert to a corresponding It\^{o} equation, we modify the drift using the correction term
\cite{chirikjian2009stochastic} $\Delta \boldsymbol{h}$ given as
\begin{equation*}\label{eq:ItoCorrection}
 \Delta \tilde{h}_i = \frac{1}{2}\frac{\partial}{\partial q_k}\left([J_r]^{-1}_{ip}\tilde{H}_{pj}\right)[J_r]^{-1}_{kn}\tilde{H}_{nj}.
\end{equation*}
Then
\begin{equation}\label{eq:ParaIto}
    d\boldsymbol{q} = \left([J_r(\boldsymbol{q})]^{-1}\tilde{\boldsymbol{h}}(\boldsymbol{q},t) + \Delta\tilde{\boldsymbol{h}}\right)dt + [J_r(\boldsymbol{q})]^{-1}\tilde{H}(\boldsymbol{q},t) d\boldsymbol{W}
\end{equation}
is the corresponding It\^{o} stochastic differential equation. The Fokker-Planck equation for the probability density function $f(g(\boldsymbol{q}),t) = \tilde{f}(\boldsymbol{q},t)$ corresponding to (\ref{eq:ParaIto}) is (no group-theory is used to derive this, but only the relationship between stochastic differential equations and Fokker-Planck equations for stochastic differential equations evolving on manifolds \cite{chirikjian2009stochastic}):
\begin{equation*}\label{eq:FPEIto}
    \frac{\partial \tilde{f}}{\partial t} +\frac{1}{|J|} \frac{\partial}{\partial q_i}\left(|J|[J_r]^{-1}_{im}\tilde{h}_mf + |J|\Delta \tilde{h}_i\tilde{f}\right) = \frac{1}{2|J|}\frac{\partial^2}{\partial q_i\partial q_j}\left(|J|[J_r]^{-1}_{im}\tilde{H}_{mn}\tilde{H}^T_{np}[J_r]^{-T}_{pj}\tilde{f}\right),
\end{equation*}
where $|J| = |\text{det}\;J_r|$. Note that for a unimodular group, $|J| = |\text{det}\;J_r| = |\text{det}\;J_l|$ where $[J_l(\boldsymbol{q})]$ is the left Jacobian matrix. Additionally, for a unimodular group \cite{chirikjian2011stochastic}, it is also known that
\begin{equation*}\label{eq:detJIdentity}
\frac{\partial}{\partial q_i}(|J|[J_r]^{-1}_{ij}) = \frac{\partial}{\partial q_i}(|J|[J_l]^{-1}_{ij}) = 0,
\end{equation*}
which is equivalent to stating that
\begin{equation*}
    \frac{\partial |J|}{\partial q_i}[J_r]^{-1}_{ij} = -|J|\frac{\partial [J_r]^{-1}_{ij}}{\partial q_i},
\end{equation*}
and likewise for left Jacobians. Using this identity we have
\begin{equation*}
    \frac{\partial \tilde{f}}{\partial t} +  [J_r]^{-T}_{mi}\frac{\partial}{\partial q_i}\left(\tilde{h}_m\tilde{f}\right) + \frac{1}{|J|}\frac{\partial}{\partial q_i}(|J|\Delta \tilde{h}_i\tilde{f}) = \frac{1}{2}\frac{1}{|J|}\frac{\partial^2}{\partial q_i\partial q_j}\left(|J|[J_r]^{-1}_{im}[J_r]^{-T}_{pj}\tilde{H}_{mn}\tilde{H}^T_{np}\tilde{f}\right),
\end{equation*}
and using the expression for the right Lie derivative (\ref{eq:rightDerivativeJT}) we have
\begin{equation*}
\begin{aligned}
    &\frac{\partial f}{\partial t} +  E^r_m\left(h_m f \right) + \frac{1}{|J|}\frac{\partial}{\partial q_i}(|J|\Delta\tilde{h}_i \tilde{f}) - \frac{1}{2}E^r_mE^r_p(H_{mn}H^T_{np}f)\\=&\frac{1}{2}\frac{1}{|J|}\frac{\partial}{\partial q_i}\left[|J|[J_r]^{-1}_{jp}\frac{\partial [J_r]^{-1}_{im}}{\partial q_j}\tilde{H}_{mn}\tilde{H}^T_{np}\tilde{f}\right]\\=&\frac{1}{|J|}\frac{\partial}{\partial q_i}\left[|J|(\Delta \tilde{h}_i -\frac{1}{2} [J_r]^{-1}_{ip}[J_r]^{-1}_{kn}\frac{\partial \tilde{H}_{pj}}{\partial q_k} \tilde{H}_{nj})\tilde{f}\right],
    \end{aligned}
\end{equation*}
which gives
\begin{equation*} 
    \frac{\partial f}{\partial t} +  E^R_m\left(h_m f\right) =\frac{1}{2}E^r_mE^r_p(H_{mn}H^T_{np}\tilde{f}) -\frac{1}{2|J|}\frac{\partial}{\partial q_i}\left[|J| [J_r]^{-1}_{ip}[J_r]^{-1}_{kn}\frac{\partial \tilde{H}_{pj}}{\partial q_k} \tilde{H}_{nj}\tilde{f}\right],
\end{equation*}
and
\begin{equation}\label{eq:rightFPE_correct}\begin{aligned}
    \frac{\partial f}{\partial t} +  E^r_m\left(\left[h_m + \frac{1}{2}\frac{\partial \tilde{H}_{mj}}{\partial q_k} [J_r]^{-1}_{kn}\tilde{H}_{nj}\right]f\right) =&
    \frac{\partial f}{\partial t} +  E^r_m\left(\left[h_m + \frac{1}{2}{H}_{nj}E^r_{n}(H_{mj})\right]f\right) \\=&
    \frac{1}{2}E^r_mE^r_p(H_{mn}H^T_{np}f),
    \end{aligned}
\end{equation}
thereby proving the theorem. 

\subsection{Proof of Theorem \ref{thm:change_var}} \label{proof:change_var}
The following lemma is useful in the proof.
\begin{lemma}
If $\mu(t)$ is an element of a matrix Lie group $G$ and it is infinitely differentiable with respect to $t$, we have
\begin{equation}
\mu(t+\tau)=\exp{\left(\tau\cdot \dot{\mu}\mu^{-1}(t)+O(\tau^2)\right)}\circ \mu(t).
\end{equation}
\begin{proof}
Since $\mu(t+\tau)\mu^{-1}(t)\in G$, the logarithm of it is well-defined. 
We can expand the logarithm as
$$
\log {\left(\mu(t+\tau) \circ \mu^{-1}(t)\right)}=\sum_{i=0}^{\infty} \tau^i A_i(t),
$$
and
$$
\mu(t+\tau)\circ \mu^{-1}(t)=\exp{\left(\sum_{i=0}^{\infty} \tau^i A_i(t)\right)}.
$$
Evaluating both side at $\tau=0$, we have $A_0(t)=\mathbb{O}$.
Taking the derivative with respect to $\tau$ and evaluating at $\tau=0$, we have $A_1(t)=\dot{\mu}\mu^{-1}(t)$.
\end{proof}
\end{lemma}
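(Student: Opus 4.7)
The plan is to analyze the group-valued function $g(\tau) \doteq \mu(t+\tau) \circ \mu^{-1}(t)$, which satisfies $g(0) = e$ and inherits infinite differentiability in $\tau$ from $\mu$. Because $g(\tau)$ remains in a neighborhood of the identity on which the matrix logarithm is analytic for sufficiently small $|\tau|$, one may define $A(\tau) \doteq \log g(\tau) \in \mathcal{G}$ and work with its Taylor expansion about $\tau = 0$.

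First I would write $A(\tau) = \sum_{i \geq 0} \tau^i A_i$ with $A_i \in \mathcal{G}$, and use the equivalent form $g(\tau) = \exp(A(\tau))$. Evaluating at $\tau = 0$ gives $e = \exp(A_0)$, and because $\log$ maps a neighborhood of $e$ bijectively to a neighborhood of $\mathbb{O}$, this forces $A_0 = \mathbb{O}$.

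Next I would determine $A_1$ by differentiating the identity $g(\tau) = \exp(A(\tau))$ at $\tau = 0$. The left-hand derivative is simply $g'(0) = \dot{\mu}(t)\mu^{-1}(t)$. For the right-hand side, the chain rule for the matrix exponential (the right trivialization involving $J_r$) reduces at $A(0) = \mathbb{O}$ to $A_1$, since $J_r(\mathbb{O}) = \mathbb{I}$; equivalently one can see this directly from the series $\exp(A(\tau)) = \mathbb{I} + A(\tau) + \tfrac{1}{2}A(\tau)^2 + \cdots$ by noting that every term beyond $A(\tau)$ contributes only $O(\tau^2)$ once $A_0 = \mathbb{O}$. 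Matching then yields $A_1 = \dot{\mu}(t)\mu^{-1}(t)$.

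Combining, $A(\tau) = \tau\, \dot{\mu}\mu^{-1}(t) + O(\tau^2)$, so $g(\tau) = \exp\!\bigl(\tau\, \dot{\mu}\mu^{-1}(t) + O(\tau^2)\bigr)$, and multiplying on the right by $\mu(t)$ delivers the stated identity. The only delicate point, which I would record explicitly, is the vanishing of the exponential's Jacobian correction at the origin; without it one might worry that $A_1$ picks up spurious $\mathrm{ad}$-terms, but these do not appear because $[\mathbb{O},\cdot] = 0$. Everything else is bookkeeping on Taylor coefficients.
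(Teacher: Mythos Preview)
Your proposal is correct and follows essentially the same route as the paper: define $g(\tau)=\mu(t+\tau)\mu^{-1}(t)$, take its logarithm, Taylor-expand in $\tau$, read off $A_0=\mathbb{O}$ at $\tau=0$, and identify $A_1=\dot{\mu}\mu^{-1}(t)$ by differentiating at $\tau=0$. Your version is in fact slightly more careful than the paper's, which simply asserts the derivative step without noting that the exponential's differential at the origin is the identity (your $J_r(\mathbb{O})=\mathbb{I}$ observation); this is the right justification and the paper leaves it implicit.
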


We proceed to proof the main theorem.
The Fokker-Planck equation (\ref{eq:leftFPE_EOM}) can be rewritten in terms of $\rho(k,t)$ by using the relationship $f(g,t)=\rho(g\circ \mu^{-1}(t),t)$ and making the change of variable $k=g\circ \mu^{-1}$.
The left hand side (LHS) of (\ref{eq:leftFPE_EOM}) becomes
\begin{equation} \label{eq:EOM_change_LHS}
    \begin{aligned}
    \frac{\partial f(g,t)}{\partial t} 
    =& \lim_{\tau \to 0} \frac{ \rho(g\circ \mu^{-1}(t+\tau),t+\tau)-\rho(g\circ \mu^{-1}(t),t)}{\tau} \\
    =& \lim_{\tau \to 0} \frac{1}{\tau}\{ \rho[g\circ \mu^{-1}(t)\circ \exp{(-\tau \cdot \dot{\mu}\mu^{-1})},t] \\ 
    & \,\,\,\, -\rho(g\circ \mu^{-1}(t),t)\}
    + \, \, \, \frac{\partial \rho(g\circ \mu^{-1},t)}{\partial t}\\
    =& \lim_{\tau\to 0} \frac{1}{\tau}[\rho(k \circ \exp(-\tau \! \cdot \! \dot \mu  \mu^{-1}) ,t)\! - \! \rho(k ,t) ] +  \frac{\partial \rho(k,t)}{\partial t}\\
    =& -(\dot \mu \mu^{-1})^{\vee}_i \cdot E^r_i\rho(k,t)  + \frac{\partial \rho(k,t)}{\partial t}.
    \end{aligned}
\end{equation}
For the right hand side (RHS) of (\ref{eq:leftFPE_EOM}), we just take $f(g,t)$ for example:
\begin{equation}
    \begin{aligned}
    E_i^l f (g,t)&\doteq \frac{d}{d\tau} f(e^{-\tau E_i}\circ g,t)\\
    &= \frac{d}{d\tau} f(e^{-\tau E_i}\circ k\circ \mu(t),t)\\
    &= E_i^l f (k\circ \mu(t),t)\\
    &= E_i^l \rho(k,t).
    \end{aligned}
\end{equation}
Note that if the Fokker-Planck equation (\ref{eq:leftFPE_EOM}) is formulated using the right Lie derivative $E^r_i$, the succinct formula above will not hold: an additional adjoint operator will appear.

Define $\boldsymbol{h}^c(k,t)=\boldsymbol{h}(k\circ \mu(t),t)$, we have
\begin{equation} \label{eq:EOM_change_RHS}
    \begin{aligned}
    E_i^l(h_i f)(k\circ \mu(t),t)&=E_i^l({h}_i^c \rho)(k,t),\\
    E_i^l E_j^l (H_{ik}H^T_{kj} f)(k\circ \mu(t),t)&=E_i^l E_j^l ({H}_{ik}{H}^T_{kj} \rho)(k,t).
    \end{aligned}
\end{equation}
Combining (\ref{eq:EOM_change_LHS}) and (\ref{eq:EOM_change_RHS}), we have the formula in the theorem.

\subsection{Proof of Corollary \ref{coro:mean_cov_prop}} \label{proof:coro}
Substituting the formulas in Lemma \ref{lemma:lie_J} into the propagation formula of mean and covariance in Theorem \ref{thm:prop_all_moments}, we have
\begin{equation*}
    \begin{aligned}
        (\dot{\mu}\mu^{-1})^{\vee}&=\langle J_r^{-1}\boldsymbol{e}_i \boldsymbol{e}_i^T \rangle^{-1} \bigg \langle \frac{1}{2}(HH^T)_{ij}(J_l^{-T})_{jk}\frac{\partial J_l^{-1}}{\partial x_k}\boldsymbol{e}_i +{h}^c_i J_l^{-1}\boldsymbol{e}_i \bigg\rangle\\
        &= \langle J_r^{-1}\rangle^{-1} \bigg\langle \frac{1}{2} \frac{\partial J_l^{-1}}{\partial x_k} (HH^T J_l^{-T})_{ik}\boldsymbol{e}_i+J_l^{-1}\boldsymbol{h}^c \bigg \rangle\\
        &=\langle J_r^{-1}\rangle^{-1} \bigg \langle \frac{1}{2}\frac{\partial J_l^{-1}}{\partial x_k}(HH^T J_l^{-T})\boldsymbol{e}_k+J_l^{-1}\boldsymbol{h}^c \bigg \rangle
    \end{aligned}
\end{equation*}
and
\begin{equation*}
    \begin{aligned}
        \dot{\Sigma} &= \! \bigg \langle \frac{1}{2} (HH^T)_{ij} \text{sym} \bigg( \! (J_l^{-T})_{jk}\frac{\partial J_l^{-1}}{\partial x_k}\boldsymbol{e}_i \boldsymbol{x}^T\!+\!J_l^{-1}\boldsymbol{e}_j\boldsymbol{e}_i^T J_l^{-T}\bigg )\\
        &\qquad\quad -(\dot{\mu}\mu^{-1})^{\vee}_i (J_r^{-1}\boldsymbol{e}_i\boldsymbol{x}^T+\boldsymbol{x}\boldsymbol{e}_i^T J_r^{-T})+h_i^c(J_l^{-1}\boldsymbol{e}_i\boldsymbol{x}^T+\boldsymbol{x}\boldsymbol{e}^T_iJ_l^{-T}) \bigg \rangle\\
        % &= \bigg \langle \frac{1}{2} \text{sym}\bigg ( \frac{\partial J_l^{-1}}{\partial x_k}(HH^T J_l^{-T})\boldsymbol{e}_k\boldsymbol{x}^T+J_l^{-1}HH^T J_l^{-T}\bigg )\\
        % &\qquad\quad -\text{sym}(J_r^{-1}(\dot{\mu}\mu^{-1})^{\vee} \boldsymbol{x}^T)+\text{sym}(J_l^{-1}\boldsymbol{h}^c\boldsymbol{x}^T)\bigg \rangle\\
        &= \bigg\langle \text{sym}\bigg [ \bigg(
\frac{1}{2}\frac{\partial J_l^{-1}}{\partial x_k} (HH^TJ_l^{-T})\boldsymbol{e}_k-J_r^{-1}(\dot{\mu}\mu^{-1})^{\vee}+J_l^{-1}\boldsymbol{h}^c \bigg) \boldsymbol{x}^T \bigg] +J_l^{-1}HH^T J_l^{-T} \bigg \rangle.
    \end{aligned}
\end{equation*}

\subsection{Proof of Theorem \ref{thm:second_order_app}} \label{proof:2nd_prop}
Before expanding terms, we present a result that will be used in the calculation: if $S$ is a symmetric matrix, then
\begin{equation} \label{eq:adS}
    \begin{aligned}
    ad_i S \boldsymbol{e}_i  &\doteq [E_i, (S\boldsymbol{e}_i)^{\wedge}]^{\vee}=[E_i, (S_{ji}\boldsymbol{e}_j)^{\wedge}]^{\vee} \\
    &= [E_i, S_{ji}E_j]^{\vee}=\frac{1}{2}(S_{ji}-S_{ij})[E_i,E_j]^{\vee}=\boldsymbol{0}
    .
    \end{aligned}
\end{equation}
The property $[E_i,E_j]=-[E_j,E_i]$ is used above.

Then, for the mean propagation in (\ref{eq:mean_cov_prop_exact}), expanding all terms to the second-order, we have
\begin{equation*}
\begin{aligned}
    (\dot{\mu}\mu^{-1})^{\vee}
    \approx& \langle I+\frac{1}{12}ad_X^2 \rangle^{-1}\bigg\langle\! \frac{1}{2}\bigg(\!\!-\frac{1}{2}ad_k\!+\!\frac{1}{12}(ad_k ad_X+ad_Xad_k)\bigg)HH^T\\
    &\quad \bigg(I-\frac{1}{2}ad_X+\frac{1}{12}ad^2_X \bigg)^T \boldsymbol{e}_k+\bigg(I-\frac{1}{2}ad_X+\frac{1}{12}ad_X^2\bigg)\\
     & \quad\cdot\bigg( \boldsymbol{h} 
     +\frac{\partial \tilde{\boldsymbol{h}}}{\partial x_i}x_i+\frac{1}{2}\cdot\frac{\partial^2 \tilde{\boldsymbol{h}}}{\partial x_i \partial x_j}x_i x_j \bigg) \bigg \rangle\\ 
    % &\qquad \bigg(\boldsymbol{h}+x_i\frac{\partial{\tilde{ \boldsymbol{h}}}}{\partial x_i}+\frac{1}{2}x_ix_j\frac{\partial^2 \tilde{\boldsymbol{h}}}{\partial x_i \partial x_j}\bigg)\bigg\rangle \\
% =&\bigg(I+\frac{1}{12}\langle ad_X^2\rangle\bigg)^{-1}\bigg \langle -\frac{1}{4}ad_kHH^T\boldsymbol{e}_k +\boldsymbol{h}+\frac{1}{12}ad_X^2\boldsymbol{h}\\
% & \!-\!\frac{1}{48}(ad_k ad_X\!+\!ad_X ad_k)HH^T\!ad_X^T \boldsymbol{e}_k\!-\!\frac{1}{48}ad_kHH^T\!(ad^2_X)^T\!\!\boldsymbol{e}_k\\
% &\qquad -\frac{1}{2}x_jad_X  \frac{\partial \tilde{\boldsymbol{h}}}{\partial x_j}+\frac{1}{2}x_ix_j\frac{\partial^2 \tilde{\boldsymbol{h}}}{\partial x_i\partial x_j}\bigg \rangle\\
\approx& \bigg(I-\frac{1}{12}\langle ad_X^2\rangle\bigg)\bigg \langle \boldsymbol{h}+\frac{1}{12}ad_X^2\boldsymbol{h}-\frac{1}{48}ad_X ad_kHH^Tad_X^T \boldsymbol{e}_k\\
&\, -\frac{1}{48}ad_kHH^T(ad^2_X)^T\!\boldsymbol{e}_k\! -\!\frac{1}{2}x_jad_X  \frac{\partial \tilde{\boldsymbol{h}}}{\partial x_j}+\frac{1}{2}x_ix_j\frac{\partial^2 \tilde{\boldsymbol{h}}}{\partial x_i\partial x_j}\bigg \rangle\\
=&\boldsymbol{h}+\bigg(-\frac{1}{48} ad_i ad_k HH^T ad_j^T \!\boldsymbol{e}_k\!-\!\frac{1}{48}ad_k HH^T ad_j^T ad_i^T \boldsymbol{e}_k \\
&\quad -\frac{1}{2}ad_i \frac{\partial \tilde{\boldsymbol{h}}}{\partial x_j} +\frac{1}{2}\frac{\partial^2  \tilde{\boldsymbol{h}}}{\partial x_i \partial x_j}\bigg)\langle x_ix_j \rangle\\
\end{aligned}
\end{equation*}
where we have used (\ref{eq:adS}) to eliminate two terms from line 2 to line 3.
We also use the approximation $(I+\varepsilon A)^{-1}\approx I-\varepsilon A$.
The covariance propagation is derived in the same way.

\subsection{Proof of Theorem \ref{thm:l_f_connection}} \label{proof:lf_connection}
Assume we have a function $f^*(g)=f(g^{-1})$.
    The left Lie derivative of $f(g)$ is
    \begin{equation*}
        \begin{aligned}
        (E^l_i f)(g,t)&\doteq \frac{d}{d\tau}\bigg|_{\tau=0} f(\exp(-\tau E_i)\circ g,t)\\
        &=\frac{d}{d\tau}\bigg|_{\tau=0} f^*(g^{-1}\circ \exp(\tau E_i),t)\\
        &=(E^r_i f^*)(g^{-1},t).
        \end{aligned}
    \end{equation*}
    By performing the same calculation, we have
    \begin{equation*}
        (E^l_i E^l_j f)(g,t)=(E^r_i E^r_j f^*)(g^{-1},t),
    \end{equation*}
    \begin{equation*}
        E^l_i(h_i f)(g,t)=-E^r_i(h_i^* f^*)(g^{-1},t), 
    \end{equation*}
    \begin{equation*}
        E^l_i E^l_j (H_{ik}H_{kj}^T f)(g,t)=E^r_i E^r_j (H^*_{ik}{H^*_{kj}}^T f^*)(g^{-1},t),
    \end{equation*}
    and
    \begin{equation*}
        \frac{\partial f}{\partial t}(g,t)=\frac{\partial f^*}{\partial t}(g^{-1},t).
    \end{equation*}
    Substituting them back into (\ref{eq:l_FPE}), we have
    % \begin{equation*}
    % \begin{aligned}
    %     \frac{\partial f^*}{\partial t}(g^{-1},t)=& -E^r_i(h_i^* f^*)(g^{-1},t)\\& \quad +\frac{1}{2}E^r_i E^r_j (H^*_{ik}{H^*_{kj}}^T f^*)(g^{-1},t),
    %     \end{aligned}
    % \end{equation*}
    the equation which is equivalent to (\ref{eq:r_FPE}).

\subsection{Proof of Theorem \ref{theorem:left_right_statistics}} \label{proof:mean_relation}
For the mean, we have
\begin{equation*}
\begin{aligned}
    \boldsymbol{0}&=\int_G \log^{\vee}(g\circ {\mu_r^*}^{-1})f^*(g)dg\\
    &=\int_G \log^{\vee}(g\circ {\mu^*_r}^{-1})f(g^{-1})dg\\
    &= \int_G \log^{\vee}(g^{-1}\circ {\mu^*_r}^{-1})f(g)dg\\
    &= -\int_G \log^{\vee}(\mu^*_r\circ g)f(g)dg\\
    &= -\int_G \log^{\vee}(\mu_l^{-1} \circ g)f(g)dg.
    \end{aligned}
\end{equation*}
where we have used the property of the bi-invariant Haar measure.
For the covariance, similar calculations yield the result.

\bibliography{IEEEabrv,References_RAL_abbrev}

\renewenvironment{IEEEbiography}[1]
  {\IEEEbiographynophoto{#1}}
  {\endIEEEbiographynophoto}
\vspace{-10 mm}
\begin{IEEEbiography}
% [{\includegraphics[width=1in,height=1.25in,clip,keepaspectratio]{}}]
{Jikai Ye} received his Bachelor’s degree in
Theoretical and Applied Mechanics from Peking University, China, in 2021. 
He became a Ph.D. student at the Department of Mechanical Engineering at the National University of Singapore under the supervision of Professor Gregory Chirikjian in 2021. 
His research interests include the application of group theory in state estimation and robot manipulation.
\vspace{-10 mm}
\end{IEEEbiography}

\begin{IEEEbiography}
% [{\includegraphics[width=1in,height=1.25in,clip,keepaspectratio]{}}]
{Amitesh S. Jayaraman} received his Bachelor of Engineering degree in Mechanical Engineering from the National University of Singapore in 2020. Since 2020, he has been a Ph.D. student at The NanoEnergy lab in Stanford University, under the supervision of Professor Hai Wang. His current research interests are in the areas of molecular and quantum transport, as well as non-equilibrium processes. Prior to his Ph.D. work, he spent his final year as an undergraduate working with Professor Gregory Chirikjian on a range of topics involving stochastic processes on Lie groups, biomolecular reconstruction and mean/covariance propagation techniques in the solution of partial differential equations.  
\vspace{-10 mm}
\end{IEEEbiography}

\begin{IEEEbiography}
% [{\includegraphics[width=1in,height=1.25in,clip,keepaspectratio]{}}]
{Gregory S. Chirikjian} (Fellow, IEEE) received undergraduate degrees from Johns Hopkins University in 1988, and a Ph.D. degree from the California Institute of Technology, Pasadena, in 1992. 

From 1992 until 2021, he served on the faculty of the Department of Mechanical Engineering at Johns Hopkins University, attaining the rank of full professor in 2001. Additionally, from 2004-2007, he served as department chair. Starting in January 2019, he moved to the National University of Singapore, where he served as Head of the Mechanical Engineering Department until June 2023, during which time he has hired 14 new professors. 
He is the author of more than 250 journal and conference papers and the primary author of three books, including Engineering Applications of Noncommutative Harmonic Analysis (2001) and Stochastic Models, Information Theory, and Lie Groups, Vols. 1+2. (2009, 2011). In 2016, an expanded edition of his 2001 book was published as a Dover book under a new title, Harmonic Analysis for Engineers and Applied Scientists.
His research interests include robotics, applications of group theory in state estimation, information-theoretic inequalities, and applied mathematics more broadly. 

Prof. Chirikjian is a 1993 National Science Foundation Young Investigator and a 1994 Presidential Faculty Fellow. In 2008 he became a fellow of the ASME and in 2010 he became a fellow of the IEEE. From 2014-15, he served as a program director for the US National Robotics Initiative, which included responsibilities in the Robust Intelligence cluster in the Information and Intelligent Systems Division of CISE at NSF.

\end{IEEEbiography}

\end{document}